\documentclass[final]{IEEEtran}
\usepackage[latin9]{inputenc}
\usepackage{amsmath}
\usepackage{amssymb}
\usepackage{graphicx}
\usepackage{esint}

\makeatletter

%
\usepackage{amsfonts}
\usepackage{algorithm}
\usepackage{amsthm}
\setcounter{MaxMatrixCols}{30}%
\providecommand{\U}[1]{\protect\rule{.1in}{.1in}}
\newtheorem{theorem}{Theorem}

\newtheorem{corollary}{Corollary}

\newtheorem{lemma}{Lemma}

\newtheorem{proposition}{Proposition}

\makeatother

\begin{document}

\title{Spatial Compressive Sensing for MIMO Radar}

\author{Marco Rossi,~\IEEEmembership{Student~Member,~IEEE,} Alexander M.
Haimovich,~\IEEEmembership{Fellow,~IEEE,} \and and Yonina C. Eldar,~\IEEEmembership{Fellow,~IEEE}%
\thanks{Copyright (c) 2013 IEEE. Personal use of this material is permitted.
However, permission to use this material for any other purposes must
be obtained from the IEEE by sending a request to pubs-permissions@ieee.org.

M. Rossi and A. H. Haimovich are with New Jersey Institute of Technology,
Newark, NJ 07102, USA. (e-mail:\ marco.rossi@njit.edu, haimovic@njit.edu).
The work of M. Rossi and A. M. Haimovich is partially supported by
the U.S. Air Force Office of Scientific Research under agreement No.
FA9550-12-1-0409.

Y. C. Eldar is with Technion - Israel Institute of Technology, Haifa
32000, Israel (e-mail: yonina@ee.technion.ac.il). The work of Y. Eldar
is supported in part by the Israel Science Foundation under Grant
no. 170/10, and in part by the Ollendorf Foundation.%
}}
\maketitle
\begin{abstract}
We study compressive sensing in the spatial domain to achieve target
localization, specifically direction of arrival (DOA), using multiple-input
multiple-output (MIMO) radar. A sparse localization framework is proposed
for a MIMO array in which transmit and receive elements are placed
at random. This allows for a dramatic reduction in the number of elements
needed, while still attaining performance comparable to that of a
filled (Nyquist) array. By leveraging properties of structured random
matrices, we develop a bound on the coherence of the resulting measurement
matrix, and obtain conditions under which the measurement matrix satisfies
the so-called \textit{isotropy} property. The coherence and isotropy
concepts are used to establish uniform and non-uniform recovery guarantees
within the proposed spatial compressive sensing framework. In particular,
we show that non-uniform recovery is guaranteed if the product of
the number of transmit and receive elements, $MN$ (which is also
the number of degrees of freedom), scales with $K\left(\log G\right)^{2}$,\ where
$K$ is the number of targets and $G$ is proportional to the array
aperture and determines the angle resolution. In contrast with a filled
virtual MIMO\ array where the product $MN$ scales linearly with
$G$, the logarithmic dependence on $G$ in the proposed framework
supports the high-resolution provided by the virtual array aperture
while using a small number of MIMO radar elements. In the numerical
results we show that, in the proposed framework, compressive sensing
recovery algorithms are capable of better performance than classical
methods, such as beamforming and MUSIC.
\end{abstract}
\begin{keywords} Compressive sensing, MIMO radar, random arrays,
direction of arrival estimation. \end{keywords}

\section{Introduction}

\PARstart{D}{etection}, localization, and tracking of targets
are basic radar functions. Limited data support and low signal-to-noise
ratios (SNR) are among the many challenges frequently faced by localization
systems. Another challenge is the presence of nearby targets, whether
in terms of location or Doppler, since closely spaced targets are
more difficult to discriminate. In multiple-input multiple-output
(MIMO) radar, targets are probed with multiple, simultaneous waveforms.
Relying on the orthogonality of the waveforms, returns from the targets
are jointly processed by multiple receive antennas. MIMO radar is
typically used in two antenna configurations, namely distributed \cite{alex}
and colocated \cite{stoica}. Depending on the mode of operation and
system architecture, MIMO radars have been shown to boost target detection,
enhance spatial resolution, and improve interference suppression.
These advantages are achieved by providing and exploiting a larger
number of degrees of freedom than \textquotedblleft{}conventional\textquotedblright{}
radar.

In this work, we focus on the application of colocated MIMO radar
to direction-of-arrival (DOA) estimation. It is well known in array
signal processing \cite{van trees} that DOA resolution improves by
increasing the array aperture. However, increasing the aperture without
increasing the number of sensors may lead to ambiguities, i.e., measurements
explained by erroneous sets of locations. A non-ambiguous uniform
linear array (ULA) must have its elements spaced at intervals no larger
than $\lambda/2$, where $\lambda$ is the signal wavelength. For
MIMO radar, unambiguous direction finding of targets is possible if
$N$ receive elements are spaced $\lambda/2$, and $M$ transmit elements
are spaced $N\lambda/2$, a configuration known as \textit{virtual
ULA} \cite{stoica}. In sampling parlance, the $\lambda/2$-spaced
array and the MIMO virtual ULA perform spatial sampling at the Nyquist
rate. The main disadvantage of this Nyquist setup is that the product
of the number of transmit and receive elements, $MN$, needs to scale
linearly with the array aperture, and thus with resolution.

In this paper, we propose the use of a sparse, random array architecture
in which a low number of transmit/receive elements are placed at random
over a large aperture. This setup is an example of spatial compressive
sensing since spatial sampling is applied at sub-Nyquist rates. The
goal of spatial compressive sensing is to achieve similar resolution
as a filled array, but with significantly fewer elements.

Localizing targets from undersampled array data links random arrays
to \textit{compressive sensing} \cite{EldarCS}. Random array theory
can be traced back to the 1960's. In \cite{Lo64}, it is shown that
as the number of sensors is increased, the random array pattern, a
well known quantity to radar practitioners, converges to its average.
This is because the array pattern's variance decreases linearly with
the number of elements. This work was extended to MIMO radar in \cite{Haleem08}.
The main conclusion of the classical random array literature was that
the random array pattern can be controlled by using a sufficient number
of sensors. However, two fundamental questions were left pending:
How many sensors are needed for localization as a function of the
number of targets, and which method should be used for localization?
Here we suggest that the theory and algorithms of compressed sensing
may be used to address these questions.

Early works on compressive sensing radar emphasize that the sparse
nature of many radar problems supports the reduction of temporal as
well as spatial sampling (an overview is given in \cite{Ender10}).
Recent work on compressive sensing for single-input single-output
radar \cite{EldarTime10}-\cite{EldarPrototype} demonstrates either
an increased resolution or a reduction in the temporal sampling rate.
Compressive sensing for MIMO radar has been applied both on distributed
\cite{Nehorai} and colocated \cite{Strohmer} setups. Much of the
previous literature on compressive sensing for colocated arrays discusses
the ULA\ setup, either within a passive system (with only receive
elements) \cite{Maliutov05} or in a MIMO radar \cite{Strohmer,Strohmer12}
setup. In particular, \cite{Strohmer12} imposes a MIMO radar virtual
ULA and derives bounds on the number of elements to perform range-angle
and range-Doppler-angle recovery by using compressive sensing techniques.
As discussed above, the (virtual) ULA setup performs Nyquist sampling
in the spatial domain. In contrast, we are interested in \emph{spatial}
compressive sensing (i.e., reducing the number of antenna elements
while fixing the array aperture), and rely on a random array geometry.
Links between compressive sensing and random arrays have been explored
in \cite{Carin09}. The author shows that spatial compressive sensing
can be applied to the passive DOA problem, allowing for a reduction
in the number of receiving elements. However, the MIMO radar framework
poses a major challenge: contrary to the passive setup, where the
rows of the sensing matrix $\mathbf{A}$ are independent, the MIMO
radar $MN$ measurements are dependent (they conform to the structure
of the MIMO random array steering vector). This lack of independence
prevents the application of the vast majority of results in the compressive
sensing literature. A MIMO radar random array architecture is studied
in \cite{Yu10}, but no recovery guarantees are provided.

Low-rate spatial sampling translates into cost savings due to fewer
antenna elements involved. It is of practical interest to determine
the least amount of elements required to guarantee correct targets
recovery. Finding conditions that guarantee recovery has been a main
topic of research, and it is one of the underpinnings of compressive
sensing theory. Recent work has shown that, for a sufficient number
of independent and identically distributed (i.i.d.) compressive sensing
measurements, non-uniform recovery can be guaranteed if a specific
property of the random sensing matrix, called \textit{isotropy}, holds
\cite{CandesRIPless}. While this property plays an important role,
this result does not apply to our setup since the MIMO radar $MN$
measurements are not independent. The dependent measurements problem
was recently addressed in \cite{RahutStromer}. There, the authors
derived conditions for non-uniform recovery using spatial compressive
sensing in a MIMO radar system with $N$ transceivers.

This work expands the literature in several ways. We propose a sparse
localization framework for a MIMO random array assuming a general
setup of $M$ transmitters and $N$ receivers. We provide a bound
on the coherence of the measurement matrix, and determine the conditions
under which the isotropy property holds. This allows us to develop
both uniform and non-uniform recovery guarantees for target localization
in MIMO radar systems. The proposed MIMO random array framework is
of practical interest to airborne and other radar applications, where
the spacing between antenna elements may vary as a function of aspect
angle towards the target, or where exact surveying of element locations
is not practical due to natural flexing of the structures involved.
Our results show that one can obtain the high-resolution provided
by a virtual array aperture while using a reduced number of antenna
elements.

The paper is organized as follows: Section \ref{Sec_Problem} introduces
the system model and the proposed sparse localization framework. Section
\ref{Sec_spatialCS} discusses spatial compressive sensing. Recovery
guarantees are derived in Section \ref{Sec_recovery}. In Section
\ref{Sec_Numerical}, we present numerical results demonstrating the
potential of the proposed framework, followed by conclusions in Section
\ref{s:conc}.

The following notation is used: boldface denotes matrices (uppercase)
and vectors (lowercase); for a vector $\mathbf{a}$, the $i$-th index
is $\mathbf{a}_{i}$, while for a matrix $\mathbf{A}$, the $i$-th
row is denoted by $\mathbf{A}\left(i,:\right)$. The complex conjugate
operator is $\left(\cdot\right)^{\ast}$, the transpose operator is
$\left(\cdot\right)^{T}$, and the complex conjugate-transpose operator
is $\left(\cdot\right)^{H}$. We define $\left\Vert \mathbf{X}\right\Vert _{0}$
as the number of non-zero norm rows of $\mathbf{X}$, the support
of $\mathbf{X}$ collects the indices of such rows, and a $K$-sparse
matrix satisfies\ $\left\Vert \mathbf{X}\right\Vert _{0}\leq K$.
The operator $\mathbb{E}$ denotes expectation and we define $\psi_{x}\left(u\right)\triangleq\mathbb{E}\left[\exp\left(jxu\right)\right]$
as the characteristic function of the random variable $x$. The symbol
``$\otimes$\textquotedblright{}\ denotes the Kronecker product.
The notation $\mathbf{x}\sim\mathcal{CN}\left(\mathbf{\mu},\mathbf{C}\right)$
means that the vector $\mathbf{x}$ has a circular symmetric complex
normal distribution with mean $\mathbf{\mu}$ and covariance matrix
$\mathbf{C}$. We denote by $K_{\alpha}\left(\cdot\right)$ the modified
Bessel function of the second kind.

\section{System Model\label{Sec_Problem}}

\subsection{MIMO Radar Model}

We model a MIMO\ radar system (see Fig. \ref{fig: system model})
in which $N$ sensors collect a finite train of $P$ pulses sent by
$M$ transmitters and returned from $K$ stationary targets. We assume
that transmitters and receivers each form a (possibly overlapping)
linear array of total aperture $Z_{TX}$ and $Z_{RX}$, respectively.
The quantities $Z_{TX}$ and $Z_{RX}$ are normalized in wavelength
units. Defining $Z\triangleq Z_{TX}+Z_{RX}$, the $m$-th transmitter
is at position $Z\xi_{m}/2$ on the $x$-axis, while the $n$-th receiver
is at position $Z\zeta_{n}/2$. Here $\xi_{m}$ lies in the interval
$\left[-\frac{Z_{TX}}{Z},\frac{Z_{TX}}{Z}\right]$, and $\zeta_{n}$
is in $\left[-\frac{Z_{RX}}{Z},\frac{Z_{RX}}{Z}\right]$. This definition
ensures that when $Z_{TX}=Z_{RX}$, both $\xi_{m}$ and $\zeta_{n}$
are confined to the interval $\left[-\frac{1}{2},\frac{1}{2}\right]$,
simplifying the notation in the sequel.

Let $s_{m}\left(t\right)$ denote the continuous-time baseband signal
transmitted by the $m$-th transmit antenna and let $\theta$ denote
the location parameter(s) of a generic target, for example, its azimuth
angle. Assume\ that the propagation is nondispersive and that the
transmitted probing signals are narrowband (in the sense that the
envelope of the signal does not change appreciably across the antenna
array). Then the baseband signal at the target location, considering
the $p$-th transmitted pulse, can be described by (see, e.g., \cite{alex})
\begin{equation}
\sum_{m=1}^{M}\exp\left(j2\pi f_{0}\tau_{m}\left(\theta\right)\right)s_{m}\left(t-pT\right)\triangleq\mathbf{c}^{T}\left(\theta\right)\mathbf{s}\left(t-pT\right).
\end{equation}
Here $f_{0}$ is the carrier frequency of the radar, $\tau_{m}\left(\theta\right)$
is the time needed by the signal emitted by the $m$-th transmit antenna
to arrive at the target, $\mathbf{s}\left(t\right)\triangleq\left[s_{1}\left(t\right),\ldots,s_{M}\left(t\right)\right]^{T}$,
$T$ denotes the pulse repetition interval, and
\begin{equation}
\mathbf{c}\left(\theta\right)=\left[\exp\left(j2\pi f_{0}\tau_{1}\left(\theta\right)\right),\ldots,\exp\left(j2\pi f_{0}\tau_{M}\left(\theta\right)\right)\right]^{T}
\end{equation}
is the transmit steering vector. Assuming that the transmit array
is calibrated, $\mathbf{c}\left(\theta\right)$ is a known function
of $\theta$.

To develop an expression for the received signal $r_{n}\left(t\right)$
at the $n$-th receive antenna, let
\begin{equation}
\mathbf{b}\left(\theta\right)=\left[\exp\left(j2\pi f_{0}\tilde{\tau}_{1}\left(\theta\right)\right),\ldots,\exp\left(j2\pi f_{0}\tilde{\tau}_{N}\left(\theta\right)\right)\right]^{T}
\end{equation}
denote the receive steering vector. Here $\tilde{\tau}_{n}\left(\theta\right)$
is the time needed for the signal reflected by the target located
at $\theta$ to arrive at the $n$-th receive antenna. Define the
vector of received signals as $\mathbf{r}\left(t\right)\triangleq\left[r_{1}\left(t\right),\ldots,r_{N}\left(t\right)\right]^{T}$.
Under the simplifying assumption of point targets, the received data
vector is described by \cite{alex}
\begin{equation}
\mathbf{r}\left(t\right)=\sum_{k=1}^{K}\sum_{p=1}^{P-1}x_{k,p}\mathbf{\mathbf{b}}\left(\theta_{k}\right)\mathbf{c}^{T}\left(\theta_{k}\right)\mathbf{s}\left(t-pT\right)+\mathbf{e}\left(t\right)\label{eq: receivedY}
\end{equation}
where $K$ is the number of targets that reflect the signals back
to the radar receiver, $x_{k,p}$ is the complex amplitude proportional
to the radar cross sections of the $k$-th target relative to pulse
$p$-th, $\theta_{k}$ are locations, and $\mathbf{e}\left(t\right)$
denotes the interference plus-noise term. The targets' positions are
assumed constant over the observation interval of $P$ pulses. We
assume that the target gains $\left\{ x_{k,p}\right\} $ follow a
Swerling Case II model, meaning that they are fixed during the pulse
repetition interval $T$, and vary independently from pulse to pulse
\cite{Skolnik}.

Analyzing how to estimate the number of targets $K$, or the noise
level, without prior information is the topic of current work \cite{Rossi12b},
but outside the scope of this paper. Therefore in the following, we
assume that the number of targets $K$\ is known and the noise level
is available.
\begin{figure}
\centering{}\includegraphics[width=3.2093in,height=1.7781in]{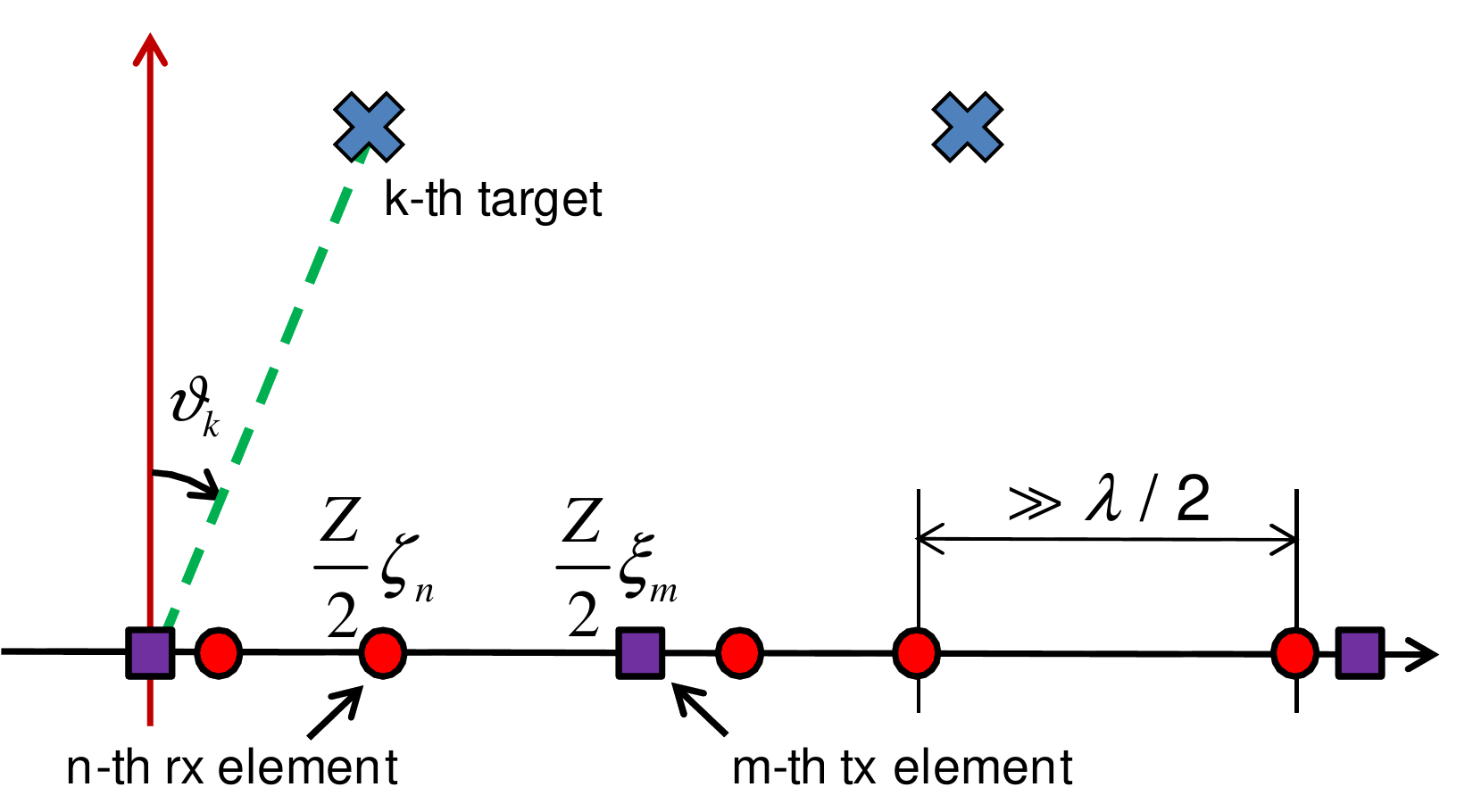}\caption{MIMO radar system model.}
\label{fig: system model}
\end{figure}


\subsection{Problem Formulation}

The purpose of the system is to determine the DOA angles to targets
of interest. We consider targets associated with a particular range
and Doppler bin. Targets in adjacent range-Doppler bins contribute
as interferences to the bin of interest. The assumption of a common
range bin implies that all waveforms are received with the same time
delay after transmission. Since range and Doppler measurements are
not of interest, the common time delay and Doppler shift are not explicitly
shown in our model. This approach is justified because angle resolution
is essentially independent of range-Doppler resolution in antenna
arrays \cite{DiFranco65}. Being capable to handle targets with non-zero
Doppler, our approach is applicable to airborne or ground targets.
Targets are assumed in the far-field, meaning that a target's DOA
parameter $\theta\triangleq\sin\vartheta$ (where $\vartheta$ is
the DOA angle) is constant across the array. Under these assumptions,
the receiver and transmitter steering vectors, $\mathbf{b}\left(\theta\right)$
and $\mathbf{c}\left(\theta\right)$ respectively, become
\begin{equation}
\mathbf{b}\left(\theta\right)=\left[\exp\left(j\pi Z\theta\zeta_{1}\right),\ldots,\exp\left(j\pi Z\theta\zeta_{N}\right)\right]^{T}\label{e:b}
\end{equation}
and
\begin{equation}
\mathbf{c}\left(\theta\right)=\left[\exp\left(j\pi Z\theta\xi_{1}\right),\ldots,\exp\left(j\pi Z\theta\xi_{M}\right)\right]^{T}.\label{e:c}
\end{equation}

By cross-correlating the received signal at each sensor with filters
matched to each of the probing waveforms, we obtain
\begin{align}
\mathbf{y}_{p} & =\operatorname{vec}\left[\int\mathbf{r}\left(t\right)\mathbf{s}^{H}\left(t-pT\right)dt\right]\label{eq: yp}\\
 & =\operatorname{vec}\left[\sum\nolimits _{k=1}^{K}\sum\nolimits _{p=0}^{P-1}x_{k,p}\mathbf{\mathbf{b}}\left(\theta_{k}\right)\mathbf{c}^{T}\left(\theta_{k}\right)\mathbf{W+}\right.\nonumber \\
 & \text{\hspace{1.2in}}\left.+\int\mathbf{e}\left(t\right)\mathbf{s}^{H}\left(t-pT\right)dt\right]\nonumber 
\end{align}
where the $M\times M$ matrix $\mathbf{W}$ has elements
\begin{equation}
\left[\mathbf{W}\right]_{m,j}=\mathbf{\int}s_{m}\left(t\right)s_{j}^{\ast}\left(t\right)dt.\label{e:W}
\end{equation}
We assume the $M$ probing waveforms to be orthogonal (e.g., pulses
modulated by an orthogonal code), therefore\ $\mathbf{W}=\mathbf{I}$.
Defining the $MN\times P$ matrix $\mathbf{Y}\triangleq\left[\mathbf{y}_{1},\ldots,\mathbf{y}_{P}\right]$,
we have from (\ref{eq: yp})
\begin{equation}
\mathbf{Y}=\mathbf{\tilde{A}}\left(\mathbf{\theta}\right)\mathbf{\tilde{X}}+\mathbf{E.}\label{e:Y}
\end{equation}
Here $\mathbf{\tilde{X}}=\left[\mathbf{\tilde{x}}_{1},\ldots,\mathbf{\tilde{x}}_{P}\right]$
is a $K\times P$ matrix with $\mathbf{\tilde{x}}_{p}=\left[x_{1,p},\ldots,x_{K,p}\right]^{T}$,
\begin{equation}
\mathbf{\tilde{A}}\left(\mathbf{\theta}\right)=\left[\mathbf{a}\left(\theta_{1}\right),\ldots,\mathbf{a}\left(\theta_{K}\right)\right]\label{eq: manifold}
\end{equation}
is a $MN\times K$ matrix with columns
\begin{equation}
\mathbf{\mathbf{a}}\left(\theta\right)\triangleq\mathbf{c}\left(\theta\right)\otimes\mathbf{\mathbf{b}}\left(\theta\right)\label{eq: ateta}
\end{equation}
known as the ``virtual array\textquotedblright{}\ steering vector,
and $\mathbf{E}=\left[\mathbf{e}_{1},\ldots,\mathbf{e}_{P}\right]$
is $MN\times P$ with $\mathbf{e}_{p}=\operatorname{vec}\left[\int\mathbf{e}\left(t\right)\mathbf{s}^{H}\left(t-pT\right)dt\right]$.
The term ``virtual array\textquotedblright{}\ indicates that $\mathbf{\mathbf{a}}\left(\theta\right)$\ can
be thought of as a steering vector with $MN$ elements.

Our aim is to recover $\mathbf{\theta}$ and $\mathbf{\tilde{X}}$\ from
$\mathbf{Y}$ using a small number of antenna elements. To do this,
we use a sparse localization framework. Neglecting the discretization
error, it is assumed that the target possible locations $\mathbf{\theta}$
comply with a grid of $G$ points $\phi_{1:G}$ (with $G\gg K$).
Since each element of $\mathbf{\theta}$ parameterizes one column
of $\mathbf{\tilde{A}}\left(\mathbf{\theta}\right)$, it is possible
to define an $MN\times G$ dictionary matrix $\mathbf{A}=\left[\mathbf{a}_{1},\ldots,\mathbf{a}_{G}\right]$,
where $\mathbf{a}_{g}=\mathbf{a}\left(\phi_{g}\right)$. From (\ref{eq: ateta}),
the steering vector $\mathbf{a}_{g}$ is the Kronecker product of
the receive steering vector $\mathbf{b}_{g}=\mathbf{b}\left(\phi_{g}\right)$
and the transmit steering vector $\mathbf{c}_{g}=\mathbf{c}\left(\phi_{g}\right)$:
\begin{equation}
\mathbf{a}_{g}=\mathbf{c}_{g}\otimes\mathbf{b}_{g}.\label{e:a}
\end{equation}
The received signal is then expressed as
\begin{equation}
\mathbf{Y}=\mathbf{AX}+\mathbf{E,}\label{eq: y}
\end{equation}
where the unknown $G\times P$ matrix $\mathbf{X}$ contains the target
locations and gains. Zero rows of $\mathbf{X}$ correspond to grid
points without a target. The system model (\ref{eq: y}) is sparse
in the sense that $\mathbf{X}$ has only $K\ll G$ non-zero rows.

Note that in the sparse localization framework, the matrix $\mathbf{A}$
is known, whereas in the array processing model (\ref{e:Y}), the
matrix $\mathbf{\tilde{A}}\left(\mathbf{\theta}\right)$ is unknown.
Given the measurements $\mathbf{Y}$ and matrix $\mathbf{A}$, our
goal translates into determining the non-zero norm rows' indices of
$\mathbf{X}$, i.e., the support of $\mathbf{X}$. The matrix $\mathbf{A}$
is governed by the choice of grid points $\phi_{1:G}$, by the number
$M$ of transmitters and their positions, $\xi_{1:M}$, and by the
number $N$ of receivers and their positions, $\zeta_{1:N}$. In the
following we assume that the transmitter (receiver) elements' positions
$\xi_{1:M}$ ($\zeta_{1:N}$) are independent and identically distributed
(i.i.d.) random variables governed by a probability density function
(pdf) $p\left(\xi\right)$ ($p\left(\zeta\right)$).

\section{Spatial Compressive Sensing Framework\label{Sec_spatialCS}}

The aim of spatial compressive sensing is to recover the unknown $\mathbf{X}$
from the measurements $\mathbf{Y}$ (see (\ref{eq: y})) using a small
number of antenna elements, $MN$, while fixing the array aperture
$Z$. In this section we introduce the proposed spatial compressive
sensing framework and overview practical recovery algorithms (the
well-known beamforming method as well as compressive sensing based
algorithms).

\subsection{Beamforming}

Consider the scenario in which the transmitters and receivers locations
support the Nyquist array (virtual ULA) geometry. In this setting,
the matrix $\mathbf{A}$ in (\ref{eq: y}) has a Vandermonde structure,
and the aperture scales linearly with the number of antenna elements,
$Z=\left(MN-1\right)/2$. If we choose the (uniform) grid of possible
target locations $\phi_{g}$ to match the array resolution, that is
$G=2Z+1$, then the matrix $\mathbf{\mathbf{A}}$ becomes a Fourier
matrix. In this case, $\mathbf{Q}\triangleq\mathbf{A}^{H}\mathbf{A}=MN\cdot\mathbf{I}$.
It follows that $\mathbf{X}$ can be estimated as $\left(1/MN\right)\cdot\mathbf{A}^{H}\mathbf{Y}$.
In array processing, this method is called beamforming. The support
of the unknown $\mathbf{X}$ is recovered by looking for peak values
of $\left\Vert \mathbf{a}_{g}^{H}\mathbf{Y}\right\Vert _{2}$ over
the grid points. Beamforming is also applied to estimate the locations
of targets not limited to a grid. This is done by finding the peaks
of $\left\Vert \mathbf{\mathbf{a}}^{H}\left(\theta\right)\mathbf{Y}\right\Vert _{2}$,
where $\mathbf{\mathbf{a}}\left(\theta\right)$ is a steering vector
(\ref{eq: ateta}) swept over the angles of interest. The shortcoming
of the Nyquist array setup is that the number of elements $MN$ must
scale linearly with the array aperture $Z$ and consequently, with
the resolution (i.e., such sampling mode requires $MN=G$).

Spatial compressive sensing implies that a sparse $\mathbf{X}$ can
be recovered from a number of spatial measurements significantly lower
than the Nyquist array, i.e., $MN\ll G$. The idea is to design the
sensing procedure so that the matrix $\mathbf{\mathbf{Q}}$ is a scalar
multiple of the identity matrix \emph{on average}%
\footnote{For instance, this is obtained when using a partial Fourier matrix.%
}, i.e., $\mathbb{E}\left[\mathbf{Q}\right]=MN\cdot\mathbf{I}$, and
to control the variance of the non-diagonal elements by using a sufficient
number of measurements. Intuitively, the more measurements $MN$ we
employ, the closer we get to a diagonal $\mathbf{\mathbf{Q}}$. Furthermore,
because when $MN<G$, each realization of $\mathbf{\mathbf{Q}}$ has
non-zero off-diagonal terms, the beamforming metric $\left\Vert \mathbf{a}_{g}^{H}\mathbf{Y}\right\Vert _{2}$
is affected not only by the $g$-th row of $\mathbf{X}$ and by the
noise, but also by any row of $\mathbf{X}$ that has non-zero norm.
This entails that, instead of beamforming, we resort to more sophisticated
recovery algorithms, which take advantage of the signal's sparsity
to mitigate the mutual interference among non-zero rows of $\mathbf{X}$.
A brief overview of compressive sensing recovery methods is provided
next.

\subsection{Compressive Sensing}

One way to classify compressive sensing models is according to the
number of pulses $P$ (``snapshots\textquotedblright{}\ in array
processing problems): single measurement vector (SMV) for $P=1$ ($\mathbf{Y}$
reduces to a single vector), or multiple measurement vector (MMV)
for $P\geq1$ ($\mathbf{Y}$ is a matrix). The system model in (\ref{eq: y})
is an example of an MMV setting. For simplicity, in the following
we consider an SMV scenario (i.e., $P=1$, $\mathbf{Y}=\mathbf{y}$,
$\mathbf{X}=\mathbf{x}$ and $\mathbf{E}=\mathbf{e}$ in (\ref{eq: y})).

In principle, a sparse $\mathbf{x}$ (i.e., it has only $K\ll G$
non-zero rows) can be recovered from the least number of elements
$MN$ by solving the non-convex combinatorial $\ell_{0}$-norm problem
\begin{equation}
\min_{\mathbf{x}}\left\Vert \mathbf{x}\right\Vert _{0}\text{ \ s.t. }\left\Vert \mathbf{y}-\mathbf{Ax}\right\Vert _{2}\leq\sigma,\label{eq: L0}
\end{equation}
or one of its equivalent formulations: a cardinality-constrained formulation,
$\min_{\mathbf{x}}\left\Vert \mathbf{y}-\mathbf{Ax}\right\Vert _{2}^{2}$
\ s.t. \ $\left\Vert \mathbf{x}\right\Vert _{0}\leq\gamma$, or
a Lagrangian formulation, $\min_{\mathbf{x}}\frac{1}{2}\left\Vert \mathbf{y}-\mathbf{Ax}\right\Vert _{2}^{2}+\nu\left\Vert \mathbf{x}\right\Vert _{0}$.
These three formulations are equivalent for a proper choice of the
parameters $\gamma$, $\nu$ and $\sigma$, which depend on prior
information, e.g. the noise level $\left\Vert \mathbf{e}\right\Vert _{2}$
or the sparsity $K$. Unfortunately, the solution to any of these
formulations requires an exhaustive search among all combinations
of non-zero norm indices of $\mathbf{x}$, necessitating exponential
complexity \cite{EldarCS}.

A variety of polynomial complexity algorithms have been proposed for
obtaining an approximate solution to (\ref{eq: L0}). One family of
methods is Matching Pursuit (MP). In its simplest version, an empty
provisional support is refined by adding one grid-point index at each
iteration. Among the matching pursuit algorithms, the most notable
in the SMV setting are Orthogonal Matching Pursuit (OMP) \cite{OMP},
Orthogonal Least Squares (OLS) \cite{bib: OLS}, and CoSaMP \cite{CoSaMP}.
For the general MMV setting, examples are the Rank Aware-Orthogonal
Recursive Matching Pursuit (RA-ORMP) algorithm \cite{Davies10} and
its generalization, Multi-Branch Matching Pursuit (MBMP) \cite{Rossi12}.
Another family of methods is known as Basis Pursuit (BP). The BP strategy
relaxes the $l_{0}$-norm in (\ref{eq: L0}) with the $l_{1}$-norm
\cite{Candes08}. The reformulation is known as LASSO, defined by
\begin{equation}
\min_{\mathbf{x}}\left\Vert \mathbf{x}\right\Vert _{1}\text{ \ s.t. }\left\Vert \mathbf{y}-\mathbf{Ax}\right\Vert _{2}\leq\sigma.\label{eq: lasso2}
\end{equation}
Unlike (\ref{eq: L0}), this problem is convex, and a global solution
can be found in polynomial time. Since it is a relaxation, the solution
obtained could be different from that of (\ref{eq: L0}). Finding
conditions that guarantee correct recovery with a specific method
(e.g. LASSO) has been a main topic of research and one of the underpinnings
of compressive sensing theory \cite{EldarCS}.

Two kinds of recovery guarantees are defined in compressive sensing:
uniform and non-uniform. A uniform recovery guarantee (addressed below
by Theorem \ref{Theorem MN coherence}) means that for a fixed instantiation
of the random measurement matrix $\mathbf{A}$, all possible $K$-sparse
signals are recovered with high probability. In contrast, a non-uniform
recovery result (addressed by Theorem \ref{theorem Rauhut}) captures
the typical recovery behavior for a random measurement matrix $\mathbf{A}$.
Specifically, suppose we are given an arbitrary $K$-sparse vector
$\mathbf{x}$, and we then draw $\mathbf{A}$ at random (independent
of $\mathbf{x}$). Non-uniform recovery details under what conditions
an algorithm will recover $\mathbf{x}$ with high probability. Note
that, for a non-uniform guarantee, $\mathbf{A}$ is being asked to
recover only a specific $\mathbf{x}$, not \emph{any} $K$-sparse
vectors. Therefore, uniform recovery implies non-uniform recovery,
but the converse is not true.

Loosely speaking, a uniform recovery guarantee can be obtained if,
with high probability, the matrix $\mathbf{A}$ has small coherence
\cite{EldarCS}. The coherence is defined as the maximum inner product
between the normalized columns of $\mathbf{A}$,
\begin{equation}
\mu\triangleq\max_{i\neq l}\frac{\left\vert \mathbf{a}_{i}^{H}\mathbf{a}_{l}\right\vert }{\left\Vert \mathbf{a}_{i}\right\Vert _{2}\left\Vert \mathbf{a}_{l}\right\Vert _{2}}.\label{eq: mu}
\end{equation}
Alternatively, uniform recovery is guaranteed if $\mathbf{A}$ satisfies
the Restricted Isometry Property (RIP) \cite{EldarCS} with high probability.
Non-uniform recovery follows if a specific property of the random
measurement matrix $\mathbf{A}$, called isotropy, holds \cite{CandesRIPless}.
The isotropy property states that the components of each row of $\mathbf{A}$\ have
unit variance and are uncorrelated, i.e.,
\begin{equation}
\mathbb{E}\left[\mathbf{A}^{H}\left(t,:\right)\mathbf{A}\left(t,:\right)\right]=\mathbf{I}\label{eq: isotropy}
\end{equation}
for every $t$.

Both (\ref{eq: mu}) and (\ref{eq: isotropy}) suggest that the matrix
$\mathbf{Q}\triangleq\mathbf{A}^{H}\mathbf{A}$ plays a key role in
establishing recovery guarantees. Indeed, because in our setting the
rows of $\mathbf{A}$ are identically distributed, a simple calculation
shows that $\mathbb{E}\left[\mathbf{A}^{H}\left(t,:\right)\mathbf{A}\left(t,:\right)\right]=\frac{1}{MN}\mathbb{E}\left[\mathbf{Q}\right]$,
thus the isotropy property requires $\mathbb{E}\left[\mathbf{Q}\right]=MN\cdot\mathbf{I}$.
Furthermore, as evident by the definition of coherence in (\ref{eq: mu}),
$\mu$ is the maximum absolute value among normalized off-diagonal
elements of $\mathbf{Q}$. In Section \ref{Sec_recovery}, by deriving
statistics of the matrix $\mathbf{Q}$, we provide conditions on design's
quantities ($p\left(\xi\right)$, $p\left(\zeta\right)$, $M$, $N$
and $\phi_{1:G}$) to obtain (uniform and non-uniform) recovery guarantees
for spatial compressive sensing.

\section{Recovery Guarantees\label{Sec_recovery}}

In this section, we develop recovery guarantees for sparse localization
with MIMO random arrays. In detail, we show how to choose the grid-points
$\phi_{1:G}$, the number of elements $MN$ and the distributions
governing the element positions $p\left(\xi\right)$ and $p\left(\zeta\right)$,
in order to guarantee\ target localization by spatial compressive
sensing via (\ref{eq: lasso2}). Due to the role of the matrix $\mathbf{Q}$
in recovery guarantees, we start by studying the statistics of $\mathbf{Q}$.

\subsection{Statistics of $\mathbf{Q}\triangleq\mathbf{A}^{H}\mathbf{A}$}

To study the statistics of $\mathbf{Q}$, we first analyze its relationship
to a quantity known to\ radar practitioners as the \emph{array pattern}
\cite{Johnson93}. In array processing, the array pattern $\beta\left(u_{i,l}\right)$
is the system response of an array beamformed in direction $\phi_{l}$
to a unit amplitude target located in direction $\phi_{i}$. In other
words, $\beta\left(u_{i,l}\right)$ is the inner product between two
normalized columns of the measurement matrix:
\begin{align}
\beta\left(u_{i,l}\right) & \triangleq\frac{\mathbf{a}_{i}^{H}\mathbf{a}_{l}}{\left\Vert \mathbf{a}_{i}\right\Vert _{2}\left\Vert \mathbf{a}_{l}\right\Vert _{2}}\label{eq: beta}\\
 & =\frac{1}{MN}\sum_{m=1}^{M}\sum_{n=1}^{N}\exp\left[ju_{i,l}\left(\zeta_{n}+\xi_{m}\right)\right],\nonumber 
\end{align}
where we defined
\begin{equation}
u_{i,l}\triangleq\pi Z\left(\phi_{l}-\phi_{i}\right).\label{eq: u}
\end{equation}
The peak of the absolute value of the array pattern for a target colinear
with the beamforming direction, $\left\vert \beta\left(0\right)\right\vert $,\emph{\ }is
called the \emph{mainlobe}.\emph{\ }Peaks of $\left\vert \beta\left(u\right)\right\vert $
for $u\neq0$, are known as \emph{sidelobes,} and the highest among
all the sidelobes is called the \emph{peak sidelobe}. Thus the terms
$\mathbf{a}_{i}^{H}\mathbf{a}_{l}$ in (\ref{eq: beta}) play the
role of sidelobes.

The relation between coherence, isotropy and array pattern is apparent.
Indeed, from (\ref{eq: mu}), (\ref{eq: beta}), and the definition
of sidelobes, the coherence, in array processing parlance, is the
peak sidelobe associated with the matrix $\mathbf{A}$. Similarly,
from (\ref{eq: isotropy}) and (\ref{eq: beta}), the isotropy can
be related to the mean array pattern 
\begin{equation}
\eta\left(u_{i,l}\right)\triangleq\mathbb{E}\left[\beta\left(u_{i,l}\right)\right],
\end{equation}
where the expectation $\mathbb{E}\left[\beta\left(u_{i,l}\right)\right]$
is taken with respect to the ensemble of element locations. In particular,
isotropy requires that $\eta\left(u_{i,l}\right)=0$ for any $i\neq l$.

For a system with randomly placed sensors, the array pattern $\beta\left(u_{i,l}\right)$
is a stochastic process. Naturally, statistics of the array pattern
of a random array depend on the pdf of the sensor locations. In \cite{Haleem08},
the authors derive the means and the variances of the real and imaginary
parts of $\beta\left(u_{i,l}\right)$. The following proposition formalizes
pertinent results from \cite{Haleem08}. For the sake of brevity,
we drop the dependency on $i$ and $l$, and denote the array pattern
as $\beta\left(u\right)$. Define\ $z\triangleq\zeta+\xi$, and assume
that the pdf of $z$,\ $p\left(z\right)$,\ is an even function
(so that $\operatorname{Im}\eta\left(u\right)=0$). Further, define
the variances of the array pattern $\sigma_{1}^{2}\left(u\right)\triangleq\operatorname*{var}\left[\operatorname{Re}\beta\left(u\right)\right]$,
$\sigma_{2}^{2}\left(u\right)\triangleq\operatorname*{var}\left[\operatorname{Im}\beta\left(u\right)\right]$
and $\sigma_{12}\left(u\right)\triangleq\mathbb{E}\left[\left(\operatorname{Re}\beta\left(u\right)-\eta\left(u\right)\right)\operatorname{Im}\beta\left(u\right)\right]$.

\begin{proposition} \label{Lemma beampattern}Let the locations $\xi$
of the transmit elements be i.i.d., drawn from a distribution $p\left(\xi\right)$,
and the locations $\zeta$ of the receive elements be i.i.d., drawn
from a distribution $p\left(\zeta\right)$. Then, for a given $u$,
the following holds:
\begin{enumerate}
\item The mean array pattern is the characteristic function of $z$, i.e.,
\begin{equation}
\eta\left(u\right)=\psi_{z}\left(u\right).\label{eqn:mean_pat_app}
\end{equation}

\item If $\xi$ and $\zeta$ are identically distributed, then (\ref{e:sig1}),
(\ref{e:sig2}), and $\sigma_{12}\left(u\right)=0$ hold. 
\end{enumerate}
\end{proposition}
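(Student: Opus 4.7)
The plan is to unfold the definition $\beta(u) = \frac{1}{MN}\sum_{m,n}\exp[ju(\zeta_n+\xi_m)]$ from (\ref{eq: beta}), push the expectation inside the double sum, and exploit the mutual independence of all $M+N$ random element positions. Each resulting joint expectation collapses into a product of characteristic functions, so the whole argument is driven by the identity $\mathbb{E}[e^{ju(\xi+\zeta)}] = \psi_\xi(u)\psi_\zeta(u) = \psi_z(u)$.

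For Part 1, I would write $\eta(u) = \frac{1}{MN}\sum_{m,n}\mathbb{E}[e^{ju\xi_m}]\,\mathbb{E}[e^{ju\zeta_n}]$ using independence of transmit and receive positions. All $MN$ summands equal $\psi_\xi(u)\psi_\zeta(u)$, and since $\xi$ and $\zeta$ are independent, this product equals $\psi_z(u)$. This is essentially a one-line calculation.

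For Part 2, decompose $\operatorname{Re}\beta(u)=\frac{1}{MN}\sum_{m,n}\cos[u(\zeta_n+\xi_m)]$ and $\operatorname{Im}\beta(u)=\frac{1}{MN}\sum_{m,n}\sin[u(\zeta_n+\xi_m)]$. Computing $\sigma_1^2(u)$, $\sigma_2^2(u)$, or $\sigma_{12}(u)$ then produces a quadruple sum over $(m,n,m',n')$ of joint expectations of products of two trigonometric terms. I would first apply the product-to-sum identities $\cos a\cos b=\tfrac{1}{2}[\cos(a-b)+\cos(a+b)]$, $\sin a\sin b=\tfrac{1}{2}[\cos(a-b)-\cos(a+b)]$ and $\cos a\sin b=\tfrac{1}{2}[\sin(a+b)-\sin(a-b)]$ to turn each product into a linear combination of cosines and sines of $u\bigl(\pm(\zeta_n+\xi_m)\pm(\zeta_{n'}+\xi_{m'})\bigr)$. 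Then I would stratify the quadruple sum by the four index-coincidence patterns $\{m=m',\,n=n'\}$, $\{m=m',\,n\neq n'\}$, $\{m\neq m',\,n=n'\}$, $\{m\neq m',\,n\neq n'\}$; in each sub-case the independent surviving positions allow the joint expectation to factor into a product involving $\psi_\xi(\pm u)$, $\psi_\zeta(\pm u)$, and $\psi_z(\pm u)$. Because $\xi$ and $\zeta$ are identically distributed and $p(z)$ is even, $\psi_z$ is real-valued, which kills every sine-type expectation and leaves the cosine-type contributions to assemble into the announced closed forms (\ref{e:sig1})--(\ref{e:sig2}). The vanishing of $\sigma_{12}(u)$ follows from the same principle: every surviving term in $\mathbb{E}[\operatorname{Re}\beta(u)\operatorname{Im}\beta(u)]$ is the imaginary part of a real quantity and is therefore zero, while $\eta(u)\,\mathbb{E}[\operatorname{Im}\beta(u)]=\psi_z(u)\cdot 0=0$ as well.

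The main obstacle is bookkeeping rather than any conceptual step: one must keep track of which of the four index-coincidence cases contributes which combination of characteristic-function factors, count the cardinality of each case so that the $1/(MN)^2$ prefactor correctly combines with $M$, $N$, $M(M-1)$, $N(N-1)$ into the intended normalization, and finally consolidate everything into the compact expressions of (\ref{e:sig1})--(\ref{e:sig2}). Conceptually, the whole computation rests on the single observation that for i.i.d., symmetric element placements every second-order moment of $\beta(u)$ is a polynomial in the real characteristic functions of $\xi$, $\zeta$, and $z$.
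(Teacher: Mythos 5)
Your proposal is correct. Part 1 is essentially the paper's own argument (interchange expectation and sum, use the identical distribution of the element positions to reduce all $MN$ terms to $\mathbb{E}[e^{juz}]=\psi_{z}(u)$; the paper does not even pass through the factorization $\psi_{\xi}(u)\psi_{\zeta}(u)$). For Part 2 you take a more elementary but heavier route. You expand $\operatorname{Re}\beta$ and $\operatorname{Im}\beta$ into cosines and sines, apply product-to-sum identities, and evaluate each of $\sigma_{1}^{2}$, $\sigma_{2}^{2}$, $\sigma_{12}$ as a quadruple sum stratified by the four index-coincidence patterns. The paper instead computes the two \emph{complex} second moments $\mathbb{E}\left[\left(\beta-\eta\right)^{2}\right]=\sigma_{1}^{2}-\sigma_{2}^{2}+j2\sigma_{12}$ and $\mathbb{E}\left[\left\vert \beta-\eta\right\vert ^{2}\right]=\sigma_{1}^{2}+\sigma_{2}^{2}$ and reads the three real quantities off from them. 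That trick buys two things: products of complex exponentials factor directly into characteristic functions under the same coincidence stratification, so no trigonometric identities are needed; and $\sigma_{12}=0$ falls out as the imaginary part of a manifestly real expression (arguably a cleaner justification than the one the paper itself states, and than your sine-killing argument, though yours is also valid). Both routes need the same counting of the coincidence classes ($MN$, $MN(M-1)$, $MN(N-1)$, $MN(M-1)(N-1)$) and the same evenness assumptions to make the characteristic functions real; your version simply carries roughly twice the bookkeeping.
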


\begin{proof} See Appendix A. \end{proof}

Proposition \ref{Lemma beampattern} links the probability distributions
$p\left(\xi\right)$ and $p\left(\zeta\right)$ (via $\psi_{z}\left(u\right)$
and $\psi\left(u\right)$) to the mean and variances of each element
of the matrix $\mathbf{Q}$, i.e., $\beta\left(u_{i,l}\right)=\frac{1}{MN}\mathbf{a}_{i}^{H}\mathbf{a}_{l}$.
As shown below, this result is used to obtain non-uniform recovery
guarantees. 
\begin{table*}[t]
\begin{align}
\sigma_{1}^{2}\left(u\right) & =\frac{1}{2MN}\left(1+\psi_{z}\left(2u\right)\right)+\psi_{z}\left(u\right)\left[\frac{N+M-2}{2MN}\left(1+\psi_{\xi}\left(2u\right)\right)-\psi_{z}\left(u\right)\frac{N+M-1}{MN}\right]\label{e:sig1}\\
\sigma_{2}^{2}\left(u\right) & =\frac{1}{2MN}\left(1-\psi_{z}\left(2u\right)\right)+\psi_{z}\left(u\right)\frac{N+M-2}{2MN}\left(1+\psi_{\xi}\left(2u\right)\right)\label{e:sig2}
\end{align}
 \hrulefill{}
\end{table*}

To characterize the statistics of the coherence $\mu$ (defined in
(\ref{eq: mu})), we need the distribution of the maximum absolute
value among normalized off-diagonal elements of $\mathbf{Q}$.

We now show that, by imposing specific constraints on the grid points
$\phi_{1:G}$ and on the probability distributions $p\left(\xi\right)$
and $p\left(\zeta\right)$, we can characterize the distributions
of the elements of $\mathbf{Q}$. To do this, we require an intermediate
result about the structure of the matrix $\mathbf{Q}$ when $\phi_{1:G}$
is a uniform grid:

\begin{lemma} \label{Lemma Toeplitz}If $\phi_{1:G}$ is a uniform
grid, $\mathbf{Q}$ is a Toeplitz matrix. \end{lemma}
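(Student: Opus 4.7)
The plan is to show that the $(i,l)$ entry of $\mathbf{Q}$ depends only on the difference $l-i$, which is the defining property of a Toeplitz matrix. The key observation is already essentially present in equation (\ref{eq: beta}) of the paper: the inner product $\mathbf{a}_i^H\mathbf{a}_l$ equals $MN\,\beta(u_{i,l})$ where $u_{i,l} = \pi Z(\phi_l - \phi_i)$ depends only on the difference of grid-point angles, not on $i$ and $l$ individually.

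First I would write, using the Kronecker product structure (\ref{e:a}) and the identity $(A\otimes B)^H(C\otimes D) = (A^H C)\otimes(B^H D)$ applied to vectors,
\begin{equation*}
\mathbf{a}_i^H\mathbf{a}_l = (\mathbf{c}_i^H\mathbf{c}_l)(\mathbf{b}_i^H\mathbf{b}_l) = \sum_{m=1}^M\sum_{n=1}^N \exp\!\bigl[j\pi Z(\phi_l-\phi_i)(\xi_m+\zeta_n)\bigr],
\end{equation*}
so that $[\mathbf{Q}]_{i,l}$ is a function of the single variable $\phi_l - \phi_i$ alone. Denote this function by $q(\phi_l-\phi_i)$.

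Next I would invoke the uniform-grid assumption: if $\phi_g = \phi_1 + (g-1)\Delta$ for some spacing $\Delta$, then $\phi_l - \phi_i = (l-i)\Delta$. Substituting into the above gives
\begin{equation*}
[\mathbf{Q}]_{i,l} = q\bigl((l-i)\Delta\bigr),
\end{equation*}
which depends on $i$ and $l$ only through $l-i$. By definition this is exactly the Toeplitz property, which concludes the proof.

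There is no real obstacle here: the lemma is a direct structural consequence of the Kronecker factorization of the steering vector together with the uniform spacing of the grid. The only thing worth being careful about is invoking the correct Kronecker mixed-product identity so that the inner product cleanly factors into a product that exhibits the $\phi_l-\phi_i$ dependence explicitly. Everything else is a one-line substitution.
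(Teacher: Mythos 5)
Your proof is correct and follows essentially the same route as the paper's: both arguments note that $[\mathbf{Q}]_{i,l}=MN\,\beta(u_{i,l})$ depends on $i$ and $l$ only through $\phi_l-\phi_i$, and that a uniform grid makes this difference a function of $l-i$ alone. Your explicit unpacking of the Kronecker mixed-product identity is a minor elaboration of what the paper already encodes in equation (\ref{eq: beta}).
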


\begin{proof} See Appendix B. \end{proof}

Thanks to Lemma \ref{Lemma Toeplitz}, whenever $\phi_{1:G}$ is a
uniform grid, $\mathbf{Q}$ is described completely by the elements
of the first row of $\mathbf{A}$, $\mathbf{a}_{1}^{H}\mathbf{a}_{i}$
for $i=1,\ldots,G$. From the definition of $\mathbf{A}$, its columns
all have squared-norm equal to $MN$. Therefore\ the elements on
the main diagonal of $\mathbf{Q}$ are equal to $MN$. Thus, we need
to investigate the remaining random elements, $\mathbf{a}_{1}^{H}\mathbf{a}_{i}$
for $i=2,\ldots,G$. By exploiting the Kronecker structure of the
columns of $\mathbf{A}$ in (\ref{e:a}), we can express the elements
of $\mathbf{Q}$ as:
\begin{align}
\mathbf{\mathbf{a}}_{i}^{H}\mathbf{\mathbf{a}}_{j} & \triangleq\left(\mathbf{c}_{i}^{H}\otimes\mathbf{b}_{i}^{H}\right)\left(\mathbf{c}_{j}\otimes\mathbf{b}_{j}\right)\nonumber \\
 & =\mathbf{\mathbf{c}}_{i}^{H}\mathbf{\mathbf{c}}_{j}\mathbf{\mathbf{b}}_{i}^{H}\mathbf{\mathbf{b}}_{j},\label{eq: aa bbcc}
\end{align}
where $\mathbf{b}$ and $\mathbf{c}$ are the steering vectors of
the receiver and transmitter arrays, respectively.

From (\ref{eq: aa bbcc}), the random variable $\beta\left(u_{1,i}\right)\triangleq\frac{1}{MN}\mathbf{a}_{1}^{H}\mathbf{a}_{i}$
is the product between the random variables $\beta_{\zeta}\left(u_{1,i}\right)\triangleq\frac{1}{N}\mathbf{b}_{1}^{H}\mathbf{b}_{i}$
and $\beta_{\xi}\left(u_{1,i}\right)\triangleq\frac{1}{M}\mathbf{c}_{1}^{H}\mathbf{c}_{i}$.
As such, the distribution of $\beta\left(u_{1,i}\right)$ (or equivalently,
$\mathbf{a}_{1}^{H}\mathbf{a}_{i}$) can be characterized from the
distributions of $\beta_{\zeta}\left(u_{1,i}\right)$ and $\beta_{\xi}\left(u_{1,i}\right)$.
Following the approach in \cite{Lo64}, we show in Appendix C that
the real and imaginary parts of $\beta_{\zeta}\left(u_{1,i}\right)$
(or $\beta_{\xi}\left(u_{1,i}\right)$) have an asymptotic joint Gaussian
distribution, but, in general, the variances of real and imaginary
parts of such variables are not equal. Interestingly, a closed form
expression for the cumulative density function (cdf) of the product
of $\beta_{\zeta}\left(u_{1,i}\right)$ and $\beta_{\xi}\left(u_{1,i}\right)$
(i.e., the cdf of\ $\frac{1}{MN}\mathbf{a}_{1}^{H}\mathbf{a}_{i}$)
exists in the special case when $\operatorname*{var}\left[\operatorname{Re}\beta_{\zeta}\left(u_{1,i}\right)\right]=\operatorname*{var}\left[\operatorname{Im}\beta_{\zeta}\left(u_{1,i}\right)\right]$
and $\operatorname*{var}\left[\operatorname{Re}\beta_{\xi}\left(u_{1,i}\right)\right]=\operatorname*{var}\left[\operatorname{Im}\beta_{\xi}\left(u_{1,i}\right)\right]$.
By meeting these conditions, in the following theorem we derive an
upper bound on the sidelobes' complementary cdf (ccdf), i.e., $\Pr\left(\frac{1}{MN}\left\vert \mathbf{a}_{1}^{H}\mathbf{a}_{i}\right\vert >q\right)$,
and show that sidelobes have uniformly distributed phases.

We address two MIMO radar setups: (1) $M$ transmitters and $N$ receivers,
where $\xi$ and $\zeta$ are independent, and (2) $N$ transceivers,
where $\xi_{n}=\zeta_{n}$, for all $n$ and $M=N$.

\begin{theorem} \label{Theorem cdf sidelobes}Let the locations $\xi$
of the transmit elements be drawn i.i.d. from a distribution $p\left(\xi\right)$,
and the locations $\zeta$ of the receive elements be drawn i.i.d.
from a distribution $p\left(\zeta\right)$. Assume that\ $p\left(\xi\right)$,
$p\left(\zeta\right)$ and the uniform grid $\phi_{1:G}$ are such
that the transmitter and receiver characteristic functions\ satisfy
\begin{equation}
\psi_{\xi}\left(u_{1,i}\right)=\psi_{\xi}\left(2u_{1,i}\right)=\psi_{\zeta}\left(u_{1,i}\right)=\psi_{\zeta}\left(2u_{1,i}\right)=0\label{eq: eta zero}
\end{equation}
for $i=2,\ldots,G$, where $u_{1,i}=\pi Z\left(\phi_{i}-\phi_{1}\right)$.
Then for $i=2,\ldots,G$:\\
1) If $\xi$ and $\zeta$ are independent:
\begin{equation}
\Pr\left(\frac{1}{MN}\left\vert \mathbf{a}_{1}^{H}\mathbf{a}_{i}\right\vert >q\right)<x\cdot K_{1}\left(x\right),\label{eq: cdf mu}
\end{equation}
where $x\triangleq2\sqrt{MN}q$.\\
2) If $\xi_{n}=\zeta_{n}$ for all $n$:
\begin{equation}
\Pr\left(\frac{1}{N^{2}}\left\vert \mathbf{a}_{1}^{H}\mathbf{a}_{i}\right\vert >q\right)<\exp\left(-Nq\right).\label{eq: cdf mu2}
\end{equation}
3) In both scenarios, the phase of $\mathbf{a}_{1}^{H}\mathbf{a}_{i}$
is uniformly distributed on the unit circle, i.e.,
\begin{equation}
\measuredangle\mathbf{a}_{1}^{H}\mathbf{a}_{i}\sim\mathcal{U}\left[0,2\pi\right].
\end{equation}

\end{theorem}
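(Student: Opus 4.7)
The natural plan is to reduce each sidelobe to a product of independent contributions from the transmit and receive arrays, and then invoke Gaussian limits together with standard Bessel-function identities. First I would use the Kronecker identity (\ref{eq: aa bbcc}) to write
\begin{equation*}
\frac{1}{MN}\mathbf{a}_{1}^{H}\mathbf{a}_{i}=\beta_{\xi}(u_{1,i})\,\beta_{\zeta}(u_{1,i}),
\end{equation*}
where $\beta_{\xi}(u)\triangleq\tfrac{1}{M}\mathbf{c}_{1}^{H}\mathbf{c}_{i}$ and $\beta_{\zeta}(u)\triangleq\tfrac{1}{N}\mathbf{b}_{1}^{H}\mathbf{b}_{i}$ are the transmit-only and receive-only array patterns, each a normalized sum of $M$ (resp.\ $N$) i.i.d.\ unit-modulus complex terms. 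Specializing the single-array version of Proposition~\ref{Lemma beampattern} under the hypothesis (\ref{eq: eta zero}) yields $\mathbb{E}[\beta_{\xi}]=\psi_{\xi}(u_{1,i})=0$, $\operatorname{Var}(\operatorname{Re}\beta_{\xi})=\operatorname{Var}(\operatorname{Im}\beta_{\xi})=\tfrac{1}{2M}$, and zero correlation between real and imaginary parts, since the extra $\psi_{\xi}(u_{1,i})$ and $\psi_{\xi}(2u_{1,i})$ terms in (\ref{e:sig1})--(\ref{e:sig2}) drop out. The same holds for $\beta_{\zeta}$ with $M$ replaced by $N$.

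Next I would invoke the central-limit argument referenced in Appendix~C to conclude that, in the regime of interest, $\sqrt{M}\beta_{\xi}(u_{1,i})$ and $\sqrt{N}\beta_{\zeta}(u_{1,i})$ are (asymptotically and with the stated dominance property) standard circularly symmetric complex Gaussians $Z_{1},Z_{2}\sim\mathcal{CN}(0,1)$. In case~(1) they are independent because $\{\xi_{m}\}$ and $\{\zeta_{n}\}$ are drawn from independent ensembles; in case~(2) the two factors coincide.

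For part~(1) I would then derive the tail of $|Z_{1}Z_{2}|$. Using $|Z_{j}|^{2}\sim\operatorname{Exp}(1)$, set $V=|Z_{1}|^{2}|Z_{2}|^{2}$; a standard change of variables together with the integral representation $K_{0}(z)=\tfrac12\int_{0}^{\infty}\tfrac1u e^{-u-z^{2}/(4u)}\,du$ yields $f_{V}(v)=2K_{0}(2\sqrt{v})$. Integrating via the identity $\tfrac{d}{dt}\bigl[tK_{1}(t)\bigr]=-tK_{0}(t)$ gives $\Pr(V>t)=2\sqrt{t}\,K_{1}(2\sqrt{t})$, which with $t=MNq^{2}$ becomes $xK_{1}(x)$ for $x=2\sqrt{MN}\,q$, matching (\ref{eq: cdf mu}). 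Part~(2) is immediate: when $\xi_{n}=\zeta_{n}$ the transmit and receive steering vectors coincide, so $\mathbf{a}_{1}^{H}\mathbf{a}_{i}=(\mathbf{c}_{1}^{H}\mathbf{c}_{i})^{2}$ and $\tfrac{1}{N^{2}}|\mathbf{a}_{1}^{H}\mathbf{a}_{i}|=|\beta_{\xi}(u_{1,i})|^{2}$; since $N|\beta_{\xi}|^{2}\to|Z_{1}|^{2}\sim\operatorname{Exp}(1)$, the tail is $e^{-Nq}$. Part~(3) follows from the representation in terms of complex Gaussians: the phase of $Z_{1}Z_{2}$ equals the sum of two independent $\mathcal{U}[0,2\pi]$ variables modulo $2\pi$, while that of $Z_{1}^{2}$ is twice such a variable; in either case the result is uniform on $[0,2\pi]$.

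The main obstacle is that the Gaussian reduction is asymptotic, whereas the theorem asserts a strict non-asymptotic inequality. I would close this gap by replacing the CLT limit with a characteristic-function dominance argument: under (\ref{eq: eta zero}), the characteristic function of $e^{ju_{1,i}\xi_{m}}$ vanishes at the frequencies $\pm1,\pm2$, so the higher-order cumulants in $\log\mathbb{E}\bigl[\exp(jt\sqrt{M}\beta_{\xi})\bigr]$ contribute only $O(1/M)$ corrections that keep the true tail strictly below its Gaussian envelope; this is precisely the non-asymptotic estimate Appendix~C is designed to produce. A secondary technical point is checking that, under the same hypothesis, the real and imaginary parts of $\beta_{\xi}$ (and $\beta_{\zeta}$) are not merely uncorrelated but asymptotically independent, which legitimizes the product-of-Rayleighs calculation leading to the $xK_{1}(x)$ bound.
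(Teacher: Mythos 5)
Your proposal follows essentially the same route as the paper's Appendix C: the Kronecker factorization $\frac{1}{MN}\mathbf{a}_{1}^{H}\mathbf{a}_{i}=\beta_{\xi}\beta_{\zeta}$, Lo's asymptotic Gaussian limit for each single-array pattern (with condition (\ref{eq: eta zero}) forcing zero mean and equal real/imaginary variances), and the reduction to a product of independent Rayleigh moduli (resp.\ the square of one Rayleigh) and a sum of independent uniform phases. The only differences are cosmetic or additive: you derive the $xK_{1}(x)$ tail explicitly from the exponential-product integral where the paper cites a handbook, and you explicitly flag --- and sketch a remedy for --- the gap between the asymptotic Gaussian approximation and the stated non-asymptotic inequality, a gap the paper itself leaves implicit.
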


\begin{proof} See Appendix C. \end{proof}

This theorem characterizes the distribution of $\frac{1}{MN}\mathbf{a}_{1}^{H}\mathbf{a}_{i}$
for the $M$ transmitters $N$ receivers setup, and for the $N$ transceivers
setup. In subsection IV.D, we provide a practical setup that satisfies
(\ref{eq: eta zero}). As shown below, this allows to obtain a uniform
recovery guarantee for spatial compressive sensing.

\subsection{Uniform recovery}

The following corollary of Theorem \ref{Theorem cdf sidelobes} bounds
the probability that the matrix $\mathbf{A}$ has high coherence,
or equivalently, the probability of a peak sidelobe:

\begin{corollary} \label{Corollary Coherence}Let the locations of
the transmit elements $\xi$ be drawn i.i.d. from a distribution $p\left(\xi\right)$,
and the locations of the receivers $\zeta$ be drawn i.i.d. from a
distribution $p\left(\zeta\right)$. Assume that the distributions
$p\left(\xi\right)$ and $p\left(\zeta\right)$ and the uniformly
spaced grid-points $\phi_{1:G}$ are such that (\ref{eq: eta zero})
holds for $i=2,\ldots,G$. Then:\\
1) If $\xi$ and $\zeta$ are independent:
\begin{equation}
\Pr\left(\mu>q\right)<1-\left[1-x\cdot K_{1}\left(x\right)\right]^{G-1},\label{eq: upper bound mu}
\end{equation}
where $x\triangleq2\sqrt{MN}q$.\\
2) If $\xi_{n}=\zeta_{n}$ for all $n$:
\begin{equation}
\Pr\left(\mu>q\right)<1-\left[1-\exp\left(-Nq\right)\right]^{G-1}.\label{eq: upper bound mu2}
\end{equation}

\end{corollary}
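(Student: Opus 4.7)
The proof plan is to convert the coherence into a maximum over the $G-1$ distinct off-diagonal sidelobes identified by Lemma \ref{Lemma Toeplitz}, apply the pointwise tail bound of Theorem \ref{Theorem cdf sidelobes} to each, and then combine these bounds into a single bound on the maximum.

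The first step is a structural simplification. Since each column of $\mathbf{A}$ is built from unit-modulus entries, one sees immediately that $\|\mathbf{a}_g\|_2^2 = MN$ for every $g$, so the normalization in the definition (\ref{eq: mu}) of $\mu$ reduces to division by $MN$. By Lemma \ref{Lemma Toeplitz}, $\mathbf{Q}=\mathbf{A}^{H}\mathbf{A}$ is Toeplitz when $\phi_{1:G}$ is uniform, so the off-diagonal entries are determined by the first row and $\mu = \max_{i=2,\ldots,G} |\mathbf{a}_1^H \mathbf{a}_i|/MN$. This reduces a max over $\binom{G}{2}$ dependent quantities to a max over $G-1$ quantities with identical marginal distributions.

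The second step invokes Theorem \ref{Theorem cdf sidelobes}: condition (\ref{eq: eta zero}) on the characteristic functions at each $u_{1,i}$ is exactly the hypothesis of the theorem, so for every $i\in\{2,\ldots,G\}$ the pointwise bound $\Pr(|\mathbf{a}_1^H\mathbf{a}_i|/MN>q)<p$ holds, with $p = x\,K_1(x)$ in the independent-arrays case and $p = \exp(-Nq)$ in the transceiver case. Writing $\{\mu\le q\} = \bigcap_{i=2}^{G}\{|\mathbf{a}_1^H\mathbf{a}_i|/MN \le q\}$ and passing to the complement, the product form claimed in (\ref{eq: upper bound mu}) and (\ref{eq: upper bound mu2}) follows once one can justify
\begin{equation}
\Pr\Bigl(\bigcap_{i=2}^{G}\{|\mathbf{a}_1^H\mathbf{a}_i|/MN \le q\}\Bigr) \ge (1-p)^{G-1}.
\end{equation}

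The main obstacle is precisely this last inequality: the $G-1$ sidelobes share the same underlying random element positions $\xi_{1:M},\zeta_{1:N}$, so they are not literally independent, and the product form is strictly tighter than what a plain union bound $\Pr(\mu>q)\le (G-1)p$ would deliver. The way I would handle this is to exploit the setup assumptions: condition (\ref{eq: eta zero}) forces the cross-moments that generate the leading-order coupling between sidelobes to vanish, echoing the asymptotic-independence argument used by Lo (1964) and adopted in \cite{Haleem08}, which is the customary device for justifying the product form of sidelobe tail distributions in the random-array literature. If a fully rigorous independence argument were needed, one would alternatively verify a negative-association property for the events $\{|\mathbf{a}_1^H\mathbf{a}_i|/MN \le q\}$ under the specified distributions of element positions, which would yield the same product bound via an FKG-type inequality; otherwise, falling back on the union bound gives the slightly weaker but conceptually identical conclusion.
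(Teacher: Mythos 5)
Your proposal follows essentially the same route as the paper: reduce $\mu$ to the maximum of the $G-1$ first-row sidelobes via the Toeplitz structure of $\mathbf{Q}$, apply the per-sidelobe ccdf bound of Theorem \ref{Theorem cdf sidelobes}, and obtain the product form by treating those sidelobes as independent. The dependence issue you rightly worry about is handled in the paper simply by declaring independence a ``conservative assumption'' with no further justification, so your discussion of how one might actually justify the product form is, if anything, more careful than the published argument.
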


\begin{proof} See Appendix D. \end{proof}

Since $\mu$ can be interpreted as the peak sidelobe of the array
pattern, eq. (\ref{eq: upper bound mu}) (eq. (\ref{eq: upper bound mu2}))
characterizes the probability of having a peak sidelobe higher than
$q$ in a system with $M$ transmitters and $N$ receivers ($N$ transceivers).
These results are not asymptotic (i.e., they do not need the number
of measurements $M$ and $N$ to tend to infinity). To further explore
this point, in numerical results we compare these bounds with empirical
simulations.

The coherence $\mu$ plays a major role in obtaining uniform recovery
guarantees for compressive sensing algorithms, as well as guaranteeing
the uniqueness of the sparsest solution to (\ref{eq: L0}). For instance,
using the coherence $\mu$,\ it is possible to obtain a bound on
the RIP constant, $\delta_{K}\leq\left(K-1\right)\mu$ \cite{rauhut}.
This ensures stable and robust recovery by $l_{1}$-minimization (i.e.,
using (\ref{eq: lasso2})) from noisy measurements. By building on
Corollary \ref{Corollary Coherence}, the following theorem establishes
the number of elements $MN$ needed to obtain uniform\ recovery with
high probability using (\ref{eq: lasso2}):

\begin{theorem}[Uniform recovery guarantee]\label{Theorem MN coherence}Let
the locations $\xi$ of the transmit elements be drawn i.i.d. from
a distribution $p\left(\xi\right)$, and the locations $\zeta$ of
the receivers be drawn i.i.d. from a distribution $p\left(\zeta\right)$.
Let the distributions $p\left(\xi\right)$ and $p\left(\zeta\right)$,
and the uniform grid $\phi_{1:G}$ be such that relations (\ref{eq: eta zero})
hold for $i=2,\ldots,G$. Further, let
\begin{equation}
MN\geq C\left(K-\frac{1}{2}\right)^{2}\left[\log\gamma+\frac{1}{2}\log\left(2\log\gamma\right)\right]^{2}\label{eq: MN coherence}
\end{equation}
where $\gamma\triangleq\sqrt{\pi}G/\left(2\epsilon\right)$, and the
constant $C=\left(43+12\sqrt{7}\right)/16\approx4.6718$. Then, with
probability at least $1-\epsilon$, for any $K$-sparse signal $\mathbf{x}\in\mathbb{C}^{G}$
measured from $MN$ MIMO radar measurements\ $\mathbf{y}=\mathbf{Ax}+\mathbf{e}$,
with $\left\Vert \mathbf{e}\right\Vert _{2}\leq\sigma$, the solution
$\mathbf{\hat{x}}$ of (\ref{eq: lasso2}) satisfies
\begin{equation}
\left\Vert \mathbf{\hat{x}-x}\right\Vert _{2}\leq c\sigma,\label{eq: error2}
\end{equation}
where $c$ is a constant that depends only on $\epsilon$. \end{theorem}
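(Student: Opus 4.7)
The plan is to chain three ingredients: (i) the coherence tail bound of Corollary~2, (ii) the Gershgorin-type bound $\delta_{2K}\le (2K-1)\mu$ relating coherence to the restricted isometry constant (cited in the paper as \cite{rauhut}), and (iii) a standard RIP-based stability theorem for the LASSO, specifically one of the form ``$\delta_{2K}<\tau^{\ast}$ implies $\|\hat{\mathbf{x}}-\mathbf{x}\|_2\le c\sigma$'' with $\tau^{\ast}=4/(6+\sqrt{7})$ (this is the Foucart-type sharp constant, whose square $(\tau^{\ast})^{-2}=(6+\sqrt{7})^2/16=(43+12\sqrt{7})/16$ is precisely the $C$ appearing in (\ref{eq: MN coherence})). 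Given this chain, the whole theorem boils down to choosing $MN$ large enough so that, with probability at least $1-\epsilon$, the coherence satisfies $\mu\le \tau^{\ast}/(2K-1)$.

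With that target in mind, I would first specialize Corollary~2 with $q=\tau^{\ast}/(2K-1)$, so that $x=2\sqrt{MN}\,q=2\sqrt{MN}\,\tau^{\ast}/(2K-1)$, and bound the tail by Bernoulli's inequality: $\Pr(\mu>q)<(G-1)\,xK_1(x)$. Next I would replace the Bessel factor by its classical majorant $K_1(x)\le\sqrt{\pi/(2x)}\,e^{-x}$, valid for every $x>0$, giving the clean sufficient condition
\begin{equation}
(G-1)\sqrt{\pi x/2}\,e^{-x}\le \epsilon,\qquad\text{i.e.,}\qquad \frac{e^{x}}{\sqrt{x}}\ge \frac{(G-1)\sqrt{\pi/2}}{\epsilon}.
\end{equation}
Taking logarithms turns this into the transcendental inequality $x-\tfrac12\log x\ge \log\bigl((G-1)\sqrt{\pi/2}/\epsilon\bigr)$. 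Setting $\gamma\triangleq\sqrt{\pi}\,G/(2\epsilon)$ (note $G\sqrt{\pi/2}/\epsilon=\sqrt{2}\,\gamma$), I would solve this by one round of fixed-point substitution: plugging the ansatz $x=\log\gamma+\tfrac12\log(2\log\gamma)$ back in and using monotonicity to verify it is sufficient. This gives $2\sqrt{MN}\,\tau^{\ast}/(2K-1)\ge \log\gamma+\tfrac12\log(2\log\gamma)$, which, after squaring and rearranging, is exactly (\ref{eq: MN coherence}) with $C=(\tau^{\ast})^{-2}=(43+12\sqrt{7})/16$.

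The main obstacle, and essentially the only nontrivial piece, is the clean inversion of the Bessel-function tail in step two: one must show that $x\ge \log\gamma+\tfrac12\log(2\log\gamma)$ really does force $(G-1)\sqrt{\pi x/2}\,e^{-x}\le\epsilon$, with the constant matching $C$ exactly rather than just up to absolute factors. This requires being careful that the $\sqrt{x}$ correction does not spoil the bound -- hence the extra $\tfrac12\log(2\log\gamma)$ term -- and that the majorization of $K_1$ is tight enough not to inflate $C$. Everything else is assembly: the union-coherence bound is given by Corollary~2, the coherence-to-RIP passage is the quoted inequality $\delta_{2K}\le(2K-1)\mu$, and the error estimate $\|\hat{\mathbf{x}}-\mathbf{x}\|_2\le c\sigma$ with $c=c(\epsilon)$ is a direct invocation of the cited RIP-based LASSO stability theorem applied on the event $\{\mu\le \tau^{\ast}/(2K-1)\}$, which has probability at least $1-\epsilon$ by construction.
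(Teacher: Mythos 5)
Your proposal follows essentially the same route as the paper's proof: the identical chain of Corollary \ref{Corollary Coherence}, the bound $\delta_{2K}\leq(2K-1)\mu$, and the RIP-based stability theorem with threshold $2/(3+\sqrt{7/4})=4/(6+\sqrt{7})$, whose inverse square gives $C$; your Bernoulli-inequality tail bound and fixed-point inversion of $x-\tfrac{1}{2}\log x\geq\log(\sqrt{2}\gamma)$ are just a cosmetic variant of the paper's linearization of $(1-t)^{G-1}$ and its use of the Lambert $W_{-1}$ asymptotics. One caveat: you claim $K_{1}(x)\leq\sqrt{\pi/(2x)}\,e^{-x}$ as a rigorous majorant valid for all $x>0$, but the inequality actually runs the other way (since $K_{\nu}(x)$ is increasing in $\nu$ and $K_{1/2}(x)=\sqrt{\pi/(2x)}\,e^{-x}$, one has $K_{1}(x)>\sqrt{\pi/(2x)}\,e^{-x}$), so this step is only an asymptotic approximation --- which is exactly how the paper treats it --- rather than the clean upper bound you assert.
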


\begin{proof} See Appendix E. \end{proof}

The significance of (\ref{eq: MN coherence}) is to indicate the number
of elements necessary to control the peak sidelobe. This is used to
obtain a uniform recovery guarantee for spatial compressive sensing.
In addition, the previous theorem ensures exact recovery of any $K$-sparse
signal using (\ref{eq: lasso2}) in the noise-free case $\sigma=0$.

It is important to point out that the number of grid points $G$ is
not a free variable since $\phi_{1:G}$ must satisfy (\ref{eq: eta zero}).
This point will be explored in subsection IV.D, where we show that
the resolution $G$ must be linearly proportional to the ``virtual\textquotedblright{}\ array
aperture $Z$.

Uniform recovery guarantees capture a worst case recovery scenario.
Indeed, the average performance is usually much better than that predicted
by uniform recovery guarantees. In the following section, we show
that if we consider a non-uniform recovery guarantee, then the zero
mean conditions (\ref{eq: eta zero})\ can be relaxed, and we obtain
recovery guarantees that scale linearly with $K$.

\subsection{Non-uniform recovery}

We now investigate non-uniform recovery guarantees. In recent work
\cite{CandesRIPless}, it has been shown that for a sufficient number
of i.i.d.\ compressive sensing measurements, non-uniform recovery
is guaranteed if isotropy holds. However, the result in \cite{CandesRIPless}
cannot be directly used in our framework since the $MN$ rows of the
matrix $\mathbf{A}$, following (\ref{eq: ateta}), are not i.i.d.
This scenario is addressed in \cite{RahutStromer} in which non-uniform
recovery is guaranteed for a MIMO radar system with $N$ transceivers
if the isotropy property (under the name \textit{aperture condition})
holds. The following theorem derives conditions on grid points $\phi_{1:G}$
and probability distributions $p\left(\xi\right)$ and $p\left(\zeta\right)$,
in order for the random matrix $\mathbf{A}$ to satisfy the isotropy
property:

\begin{theorem} \label{Lemma Isotropy}Let the locations $\xi$ of
the transmit elements be drawn i.i.d. from a distribution $p\left(\xi\right)$,
and the locations $\zeta$ of the receivers be drawn i.i.d. from a
distribution $p\left(\zeta\right)$. For every $t$, the $t$-th row
of $\mathbf{A}$ in (\ref{eq: y}) satisfies the isotropy property
\cite{CandesRIPless}, i.e.,
\begin{equation}
\mathbb{E}\left[\mathbf{A}^{H}\left(t,:\right)\mathbf{A}\left(t,:\right)\right]=\mathbf{I,}
\end{equation}
iff $p\left(\xi\right)$, $p\left(\zeta\right)$ and $\phi_{1:G}$
are chosen such that, for $i=2,\ldots,G$, 
\begin{equation}
\psi_{z}\left(u_{1,i}\right)=0,\label{eq: dirac}
\end{equation}
where $z\triangleq\zeta+\xi$ and $u_{1,i}\triangleq\pi Z\left(\phi_{i}-\phi_{1}\right)$.
\end{theorem}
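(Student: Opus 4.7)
The plan is to compute $\mathbb{E}[\mathbf{A}^H(t,:)\mathbf{A}(t,:)]$ entry by entry and match it against the identity. First I would exploit the Kronecker structure $\mathbf{a}_g = \mathbf{c}_g \otimes \mathbf{b}_g$ from (\ref{e:a}) together with (\ref{e:b})--(\ref{e:c}). Indexing the rows of $\mathbf{A}$ by pairs $t \leftrightarrow (m,n)$ coming from the $\operatorname{vec}$ of (\ref{eq: yp}), each entry of the $t$-th row has the clean product form
\begin{equation}
[\mathbf{A}(t,:)]_g \;=\; \exp\!\left(j\pi Z\phi_g\,\xi_m\right)\exp\!\left(j\pi Z\phi_g\,\zeta_n\right) \;=\; \exp\!\left(j\pi Z\phi_g\,z\right),
\end{equation}
where $z \triangleq \xi_m + \zeta_n$. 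Thus a whole row of $\mathbf{A}$ is driven by the single scalar random variable $z$, even though the full matrix has dependent rows.

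The second step is the key computation. Using the expression above,
\begin{equation}
\mathbb{E}\!\left[\overline{A(t,i)}\,A(t,j)\right] \;=\; \mathbb{E}\!\left[\exp\!\left(j\pi Z(\phi_j-\phi_i)\,z\right)\right] \;=\; \psi_z(u_{i,j}),
\end{equation}
with $u_{i,j} \triangleq \pi Z(\phi_j-\phi_i)$ as in (\ref{eq: u}). The factorization $\psi_z(u) = \psi_\xi(u)\psi_\zeta(u)$ is available thanks to independence of $\xi$ and $\zeta$, but it is not even needed for this direction: the expectation is taken directly. The diagonal case $i=j$ gives $\psi_z(0)=1$ automatically, so the isotropy requirement $\mathbb{E}[\mathbf{A}^H(t,:)\mathbf{A}(t,:)] = \mathbf{I}$ is equivalent to $\psi_z(u_{i,j}) = 0$ for every $i\neq j$.

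The last step is to show that imposing vanishing only at the $G-1$ shifts $u_{1,i}$, $i=2,\ldots,G$, is in fact equivalent to imposing it at all pairs $(i,j)$ with $i\neq j$. On the uniform grid of Lemma \ref{Lemma Toeplitz}, $\phi_j - \phi_i$ depends only on $j-i$, so $u_{i,j} = \pm u_{1,|j-i|+1}$; combined with the Hermitian symmetry $\psi_z(-u) = \overline{\psi_z(u)}$ of characteristic functions, vanishing of $\psi_z$ at $\{u_{1,k}\}_{k=2}^{G}$ is equivalent to vanishing on $\{u_{i,j}\}_{i\neq j}$, giving the ``if'' direction. Conversely, the family $\{u_{1,i}\}_{i=2}^{G}$ is itself a subset of the off-diagonal shifts, so isotropy forces $\psi_z(u_{1,i})=0$ for each such $i$, giving ``only if.'' I expect no real obstacle here; the routine work is the Kronecker/vectorization bookkeeping, and the only subtle point is the reduction of the full off-diagonal condition to the $(1,i)$ shifts, which rests on the Toeplitz structure from Lemma \ref{Lemma Toeplitz} plus the Hermitian symmetry of $\psi_z$.
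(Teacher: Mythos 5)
Your proposal is correct and follows essentially the same route as the paper: the $(i,j)$ entry of $\mathbb{E}\left[\mathbf{A}^{H}\left(t,:\right)\mathbf{A}\left(t,:\right)\right]$ is identified with $\psi_{z}\left(u_{i,j}\right)$ (the paper reaches this via Proposition \ref{Lemma beampattern} and $\mathbb{E}\left[\mathbf{Q}\right]$, you compute it directly per row), and the reduction of the off-diagonal condition to the shifts $u_{1,i}$ rests on the uniform-grid Toeplitz structure of Lemma \ref{Lemma Toeplitz} together with Hermitian symmetry, exactly as in the paper. Your explicit remark that $\psi_{z}\left(-u\right)=\overline{\psi_{z}\left(u\right)}$ is a slightly more careful justification of the step the paper phrases as ``we can focus only on the first row,'' but the substance is identical.
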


\begin{proof} See Appendix F. \end{proof}

Theorem \ref{Lemma Isotropy} links grid points $\phi_{1:G}$ and
probability distributions $p\left(\xi\right)$ and $p\left(\zeta\right)$
(through the characteristic function of $z$) with the isotropy property
of $\mathbf{A}$. When (\ref{eq: dirac}) holds, it can be shown that
the aperture condition used in \cite{RahutStromer} holds too. Therefore,
using the same approach as in \cite{RahutStromer}, non-uniform recovery
of\ $K$ targets via (\ref{eq: lasso2}) is guaranteed in the proposed
spatial compressive sensing framework. The following Theorem customizes\ Theorem
2.1 in \cite{RahutStromer} to our framework:

\begin{theorem}[Non-uniform recovery guarantee]\label{theorem Rauhut}Consider
a $K$-sparse $\mathbf{x}\in\mathbb{C}^{G}$ measured from $MN$ MIMO
radar measurements\ $\mathbf{y}=\mathbf{Ax}+\mathbf{e}$\textbf{,}
where $\left\Vert \mathbf{e}\right\Vert _{2}\leq\sigma$. Let $\varepsilon>0$
be an arbitrary scalar, and suppose that the random matrix $\mathbf{A}$
satisfies the isotropy property, $\mathbb{E}\left[\mathbf{A}^{H}\left(t,:\right)\mathbf{A}\left(t,:\right)\right]=\mathbf{I}$
$\forall t$. Then with probability at least $1-\varepsilon$, the
solution $\mathbf{\hat{x}}$ to (\ref{eq: lasso2}) obeys
\begin{equation}
\left\Vert \mathbf{\hat{x}-x}\right\Vert _{2}\leq C_{1}\sigma\sqrt{\frac{K}{MN}},\label{eq: error}
\end{equation}
provided that the number of rows of $\mathbf{A}$ meets
\begin{equation}
MN\geq CK\log^{2}\left(\frac{cG}{\varepsilon}\right),\label{eq: MN_beta}
\end{equation}
where $C_{1}$, $C$ and $c$ are constants. \end{theorem}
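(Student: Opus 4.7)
The plan is to derive this result as a customization of Theorem 2.1 of \cite{RahutStromer}, which establishes non-uniform $\ell_{1}$ recovery for MIMO spatial compressive sensing under their \emph{aperture condition}. My first task is therefore to check that the isotropy hypothesis $\mathbb{E}[\mathbf{A}^{H}(t,:)\mathbf{A}(t,:)] = \mathbf{I}$ assumed here implies the aperture condition used in \cite{RahutStromer}. Since both conditions encode that the components of each row of $\mathbf{A}$ are zero-mean, unit-variance, and uncorrelated, this step is essentially an unpacking of definitions; Theorem \ref{Lemma Isotropy} already gives a concrete characteristic-function criterion $\psi_{z}(u_{1,i})=0$ under which isotropy is achieved.

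With the isotropy/aperture hypothesis in hand, I would invoke the \cite{CandesRIPless}-style recovery framework as adapted to dependent measurements in \cite{RahutStromer}. The standard route is to construct an inexact dual certificate $\mathbf{v}$ supported on the true sparsity pattern $S$ via the \emph{golfing scheme}: partition the $MN$ rows of $\mathbf{A}$ into $L \approx \log(G/\varepsilon)$ disjoint batches, and at stage $k$ add a correction $\mathbf{v}_{k} = \mathbf{A}_{k}^{H}\mathbf{A}_{k,S}\mathbf{q}_{k-1}$ that contracts the on-support residual $\mathbf{q}_{k-1} = \operatorname{sgn}(\mathbf{x}_{S}) - \mathbf{A}_{S}^{H}\mathbf{v}_{k-1}$ geometrically, while keeping the off-support control $\|\mathbf{A}_{S^{c}}^{H}\mathbf{v}\|_{\infty} < 1$. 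Establishing the required per-stage bounds reduces to a matrix Bernstein / Rudelson-type concentration inequality applied to sums of the outer products $\mathbf{A}^{H}(t,:)\mathbf{A}(t,:)$, together with a uniform bound on $\max_{t,g}|\mathbf{A}(t,g)|$, which in our setup is trivially $1$ because every entry of $\mathbf{A}$ is a unit-modulus exponential.

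The principal obstacle is that the $MN$ rows of $\mathbf{A}$ are \emph{not} independent: by the Kronecker factorization $\mathbf{a}_{g} = \mathbf{c}_{g} \otimes \mathbf{b}_{g}$, two rows indexed by the same transmitter (or the same receiver) share a random factor, so the i.i.d.\ hypothesis underlying \cite{CandesRIPless} is violated. The remedy, following \cite{RahutStromer}, is to re-express each concentration bound as a decoupled bilinear quantity in the $M+N$ genuinely independent random variables $\{\xi_{m}\}_{m=1}^{M}$ and $\{\zeta_{n}\}_{n=1}^{N}$, and then apply matrix Bernstein to the transmit-sum and the receive-sum in turn; a conditioning argument stitches the two together with only a modest loss in the constants. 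This decoupling is what lets the logarithmic factor appear as $\log^{2}(cG/\varepsilon)$ rather than as a power of $G$, and it is the step most sensitive to the precise form of the isotropy hypothesis, since any residual bias in $\psi_{z}$ would survive the Kronecker factorization and break the martingale structure at the heart of the golfing scheme.

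Given (\ref{eq: MN_beta}), propagating the per-stage probabilities through the $L$ golfing rounds via a union bound yields the overall failure probability $\varepsilon$, and a standard argument comparing the constructed dual certificate with the KKT conditions for (\ref{eq: lasso2}) converts existence of the certificate into the stability estimate $\|\hat{\mathbf{x}} - \mathbf{x}\|_{2} \leq C_{1}\sigma\sqrt{K/MN}$, completing the proof.
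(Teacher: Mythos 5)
Your proposal takes essentially the same route as the paper: the paper's proof is simply an invocation of Theorem 2.1 of \cite{RahutStromer} with the substitutions $s \to K$, $n^{2} \to MN$, $N \to G$ (dropping the nearly-sparse error term), after noting--as you do--that the isotropy hypothesis implies the aperture condition of that reference. Your additional sketch of the golfing-scheme dual certificate and the decoupling of the dependent rows is a faithful account of what happens inside the cited theorem, but the paper itself does not reprove any of that machinery.
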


\begin{proof} The theorem results from Theorem 2.1 in \cite{RahutStromer}
by performing the following substitutions: $K$ for $s$ (sparsity),
$MN$ for $n^{2}$ (number of rows of $\mathbf{A}$), and $G$ for
$N$ (number of columns of $\mathbf{A}$). Since in this work we consider
$K$-sparse signals, in (\ref{eq: error}) we discarded the term that
accounts for\ nearly-sparse signals present in \cite{RahutStromer}.
\end{proof}

Theorem \ref{theorem Rauhut} shows that, when the isotropy property
is satisfied, the proposed framework enables us to localize $K$ targets
using about $MN=K\left(\log G\right)^{2}$ MIMO radar measurements.

Some comments are in order. First, it is important to stress that
in (\ref{eq: MN_beta}), the number of elements scales linearly with
the sparsity $K$. This is in contrast with uniform recovery bounds
based on coherence (e.g. (\ref{eq: MN coherence})), which scale quadratically
with $K$. Moreover, the significance of the logarithmic dependence
on $G$ is that the proposed framework enables high resolution with
a small number of MIMO radar elements. This is in contrast with a
filled virtual MIMO\ array where the product $MN$ scales linearly
with $G$. Again, it is crucial to point out that the number of grid
points $G$ is not a free variable, because the grid points $\phi_{1:G}$
must satisfy (\ref{eq: dirac}). Second, differently from (\ref{eq: error}),
in (\ref{eq: error2}) the error did not depend on $K$, $M$ and
$N$. Third, (\ref{eq: error}) shows that reconstruction is stable
even when the measurements are noisy. Additionally, we see from (\ref{eq: error})
that when $\sigma=0$, Theorem \ref{theorem Rauhut} guarantees exact
reconstruction with high probability, when (\ref{eq: MN_beta}) holds.
Both results above can be extended to approximately sparse vectors,
in which case an extra term appears in the right hand-side of (\ref{eq: error2})
and (\ref{eq: error}). This situation may emerge when targets are
not exactly on a grid, however, the analysis of such scenario is outside
the scope of this paper. Finally, to suggest some intuition into the
above conditions, notice that recovery can be guaranteed by requiring
the matrix $\mathbf{A}$ to satisfy the isotropy property, $\mathbb{E}\left[\mathbf{Q}\right]=MN\cdot\mathbf{I}$,
and by controlling the variances of the non-diagonal elements of $\mathbf{Q}$(which,
according to (\ref{e:sig1}) and (\ref{e:sig2}), scale with $1/MN$)
through the use of a sufficient number of measurements $MN$.

\subsection{Element locations and grid-points}

We now provide an example of $p\left(\xi\right)$, $p\left(\zeta\right)$
and $\phi_{1:G}$ that meet the requirements of Theorem \ref{Theorem cdf sidelobes}
and Theorem \ref{Lemma Isotropy}.

The conditions needed by each theorem constraint the characteristic
function of the random variables $\xi$, $\zeta$. Let, $Z_{TX}=Z_{RX}=Z/2$,
such that the random variables $\xi$ and $\zeta$ are both confined
to the interval $\left[-\frac{1}{2},\frac{1}{2}\right]$. The characteristic
function of a uniform random variable $\zeta\sim\mathcal{U}\left[-\frac{1}{2},\frac{1}{2}\right]$
is the sinc function, i.e.,
\begin{equation}
\psi_{\zeta}\left(u\right)=\frac{\sin\left(u/2\right)}{u/2}.\label{eq: sinc square}
\end{equation}
Therefore, when $\zeta$ is uniformly distributed, by choosing $\phi_{1:G}$
as a uniform grid of $2/Z$-spaced points in the range $\left[-1,1\right]$,
we have that $\psi_{\zeta}\left(u_{i,l}\right)=\psi_{\zeta}\left(2u_{i,l}\right)=0$
for any $i\neq l$ (since $u_{i,l}\triangleq\pi Z\left(\phi_{l}-\phi_{i}\right)=2\pi\left\vert i-l\right\vert $).
The number of grid points $G$ is not a free variable, because the
grid points $\phi_{1:G}$ must satisfy (\ref{eq: eta zero}) or (\ref{eq: dirac}).
For instance, in the example above, $\phi_{1:G}$ must be a uniform
grid of $2/Z$-spaced points between $\left[-1,1\right]$, and, assuming
that $Z$ is an integer, the number of grid points is $G=Z+1$.

The dependence between the number of grid points $G$ and the virtual
array aperture $Z$ can be understood by noticing that both (\ref{eq: eta zero})
and (\ref{eq: dirac}) impose that grid points are placed at the zeros
of the characteristic function of the relative random variable (i.e.
the sinc function). The spacing of the zeros is dictated by the virtual
array aperture $Z$. The bigger the aperture the more grid points
fit in the range $\left[-1,1\right]$.

Summarizing, choose $\phi_{1:G}$ as a uniform grid of $2/Z$-spaced
points in the range $\left[-1,1\right]$. Then:
\begin{enumerate}
\item If \textit{both} $\zeta$ and $\xi$ are uniformly distributed, relations
(\ref{eq: eta zero}) hold, and we can invoke Theorem \ref{Theorem cdf sidelobes}
(for uniform recovery);
\item If \textit{either} $\zeta$ or $\xi$ are uniformly distributed, relation
(\ref{eq: dirac}) holds, and we can invoke Theorem \ref{Lemma Isotropy}
(for non-uniform recovery). 
\end{enumerate}
Note that non-uniform recovery, i.e., (\ref{eq: dirac}), requires
only one density function, say $p\left(\zeta\right)$, to be uniform,
while the other distribution, $p\left(\xi\right)$, can be arbitrarily
chosen, e.g., it can be even deterministically dependent on $\zeta$.
For instance, (\ref{eq: dirac}) is satisfied in a MIMO radar system
with $N$ transceivers, i.e., when $\zeta_{1:N}$ are i.i.d. uniform
distributed and we deterministically set $\xi_{n}=\zeta_{n}$.

Finally, we remark that the analysis provided in this section regarding
the statistics of the matrix $\mathbf{A}$ may be used with block-sparsity
results in the compressive sensing literature \cite{EldarCS} to obtain
guarantees for the general MMV scenario.

\section{Numerical Results\label{Sec_Numerical}}

In this section, we present numerical results illustrating the proposed
spatial compressive sensing framework.
\begin{figure}
\centering{}\includegraphics[width=3.4402in,height=2.7544in]{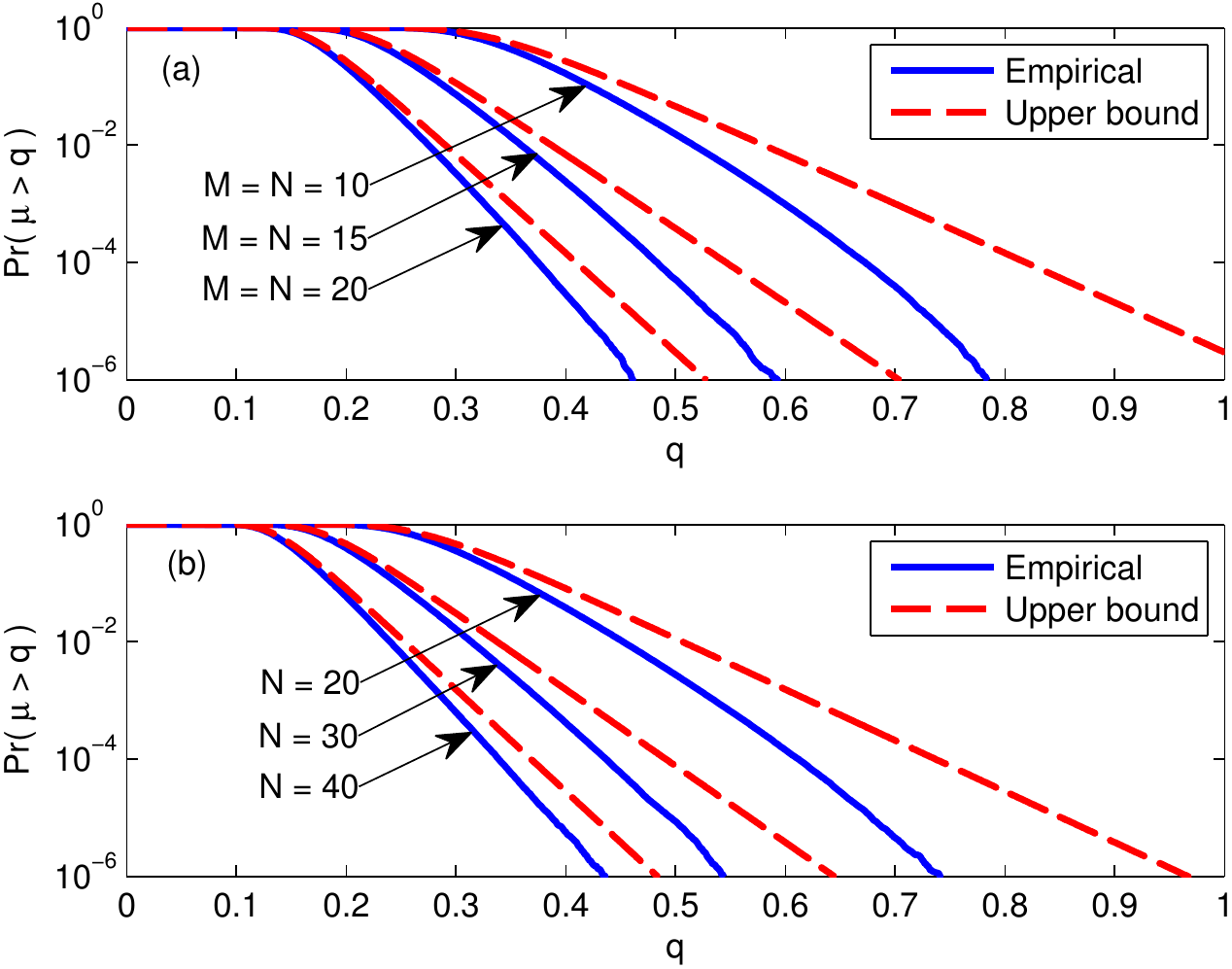}\caption{Empirical ccdf of the coherence of the measurement matrix $\mathbf{A}$
and its upper bound as a function of the number of elements. (a) considers
the $M$ transmitters and $N$ receivers setup and the upper bound
is given in (\ref{eq: upper bound mu}); (b) considers the $N$ transceivers
setup and the upper bound is given in (\ref{eq: upper bound mu2}).}
\label{fig: ccdf mu}
\end{figure}


We design an example to follow Theorem \ref{Theorem cdf sidelobes},
in which $p\left(\xi\right)$ and $p\left(\zeta\right)$ are both
uniform distributions, and $\phi_{1:G}$ represents a uniform grid
of $2/Z$-spaced points in the interval $\left[-1,1\right]$, which
implies that the number of grid points is $G=Z+1$. The system transmits
a total of $P$ pulses. When expressed in discrete form, each pulse
consists of $M$\ orthogonal codes composed by\ $M$\ symbols.
In particular, we select the codes to be the rows of the $M\times M$\ Fourier
matrix. Equal length apertures were assumed for the transmit and receive
arrays, i.e., $Z_{TX}=Z_{RX}=Z/2$. The target gains were given by
$x_{k}=\exp\left(-j\varphi_{k}\right)$, with $\varphi_{k}$ drawn
i.i.d., uniform over $\left[0,2\pi\right)$, for all $k=1,...,K$
(where $K$ is the number of targets). The noise (see (\ref{eq: y}))
was assumed to be distributed as $\operatorname{vec}\left(\mathbf{E}\right)\sim\mathcal{CN}\left(\mathbf{0},\sigma^{2}\mathbf{I}\right)$
and the SNR is defined as $\mathbf{-}10\log_{10}\sigma^{2}$. From
the definition of the measurement matrix $\mathbf{A}$, its columns
all have squared-norm equal to $MN$. Throughout the numerical results,
we normalize the columns of $\mathbf{A}$ to have unit norm.

We first investigate the statistics of the matrix $\mathbf{Q}$ discussed
in Section \ref{Sec_spatialCS}. In particular, we analyze the coherence
$\mu$ of the measurement matrix $\mathbf{A}$ compared to the result
given in Theorem \ref{Theorem cdf sidelobes}. The virtual aperture
was $Z=250$ (thus $G=251$). In Fig. \ref{fig: ccdf mu}, we plot
the ccdf of the coherence $\mu$, i.e. $\Pr\left(\mu>q\right)$, as
a function of the number of elements for (a) the $M$ transmitters
and $N$ receivers setup and (b) the $N$ transceivers setup. As a
reference, we also plot the upper bound given in (\ref{eq: upper bound mu})
and (\ref{eq: upper bound mu2}), respectively.\ It can be seen how
the upper bound becomes tighter and tighter as the number of elements
increases. In addition, it is interesting to notice that the coherence
of the matrix $\mathbf{A}$ for the $N^{\prime}$ transceivers setup
is very close to the coherence of the matrix $\mathbf{A}$ for the
the $M$ transmitters and $N$ receivers setup when $M=N=N^{\prime}/2$.

We next present localization performance using practical algorithms.
We implemented target localization using LASSO following the algorithm
proposed in \cite{RahutStromer} to solve problem (\ref{eq: lasso2}).
In addition, we implement Beamforming, OLS, OMP, CoSaMP, FOCUSS \cite{focuss}
and MBMP. In the MMV setup we also compare MBMP, RA-ORMP \cite{Davies10},
M-FOCUSS \cite{focuss}, and MUSIC \cite{Schmidt86}. Concerning MBMP,
it requires as input a $K$ length branch vector $\mathbf{d}$, which
define the algorithm's complexity (see \cite{Rossi12} for details
on setting parameters for MBMP). The output of MBMP is the estimated
support. Notice that, when $\mathbf{d}=\left[1,\ldots,1\right]$,
MBMP reduces to OLS in the SMV scenario, and to RA-ORMP in the MMV
scenario. We define the support recovery error when the estimated
support does not coincide with the true one. For algorithms that return
an estimate $\mathbf{\hat{x}}$ of the sparse vector $\mathbf{x}$
(e.g., LASSO, FOCUSS and MUSIC), the support was then identified as
the $K$ largest modulo entries of the signal $\mathbf{\hat{x}}$.

Analyzing how to set the noise parameter $\sigma$ in (\ref{eq: L0}),
or the sparsity $K$, without prior information is the topic of current
work \cite{Rossi12b}, but outside the scope of this paper. Therefore,
we assume that the noise level is available and that the number of
targets $K$ is known (notice that this information is needed by all
the algorithms including MBMP). The virtual aperture was $Z=250$
(thus $G=251$), and tests were carried out for $K=5$ targets. The
SNR was $20$ dB throughout.

The main focus of the paper is to reduce the number of antenna elements
while avoiding sidelobes errors and while preserving the high-resolution
provided by the virtual array aperture $Z$ (i.e., recovering $2/Z$-spaced
targets). Therefore, to account for errors due to sidelobes (an erroneous
target is estimated at a sidelobe location) and unresolved targets
(the responses of two targets in consecutive grid-points is merged
in only one grid-point), we consider as performance metric the support
recovery error probability, defined as the error event when at least
one target is estimated erroneously.

We first treat the non-uniform guarantee setting. Monte Carlo simulations
were carried out using independent realizations of target gains, targets
locations, noise and element positions. Fig. \ref{fig: nonuniform}
illustrates the probability of support recovery error as a function
of the number of measurements $MN$. From the figure, it can be seen
that compressive sensing algorithms enable better performance (smaller
probability of sidelobe error and better resolution) than beamforming,
which is not well-suited for the sparse recovery framework. Among
compressive sensing algorithms, two main groups appear: on one side,
OLS and OMP, which both have practically the same performance; on
the other side, L1-norm opt., CoSaMP, FOCUSS and MBMP. Among the latter
group, it is important to point out that, although the recovery guarantee
established in Theorem \ref{theorem Rauhut} requires the solution
of (\ref{eq: lasso2}), and thus using LASSO, MBMP provides a viable
and competitive way to perform target localization.
\begin{figure}[ptb]
\centering{}\includegraphics[width=3.4402in,height=2.655in]{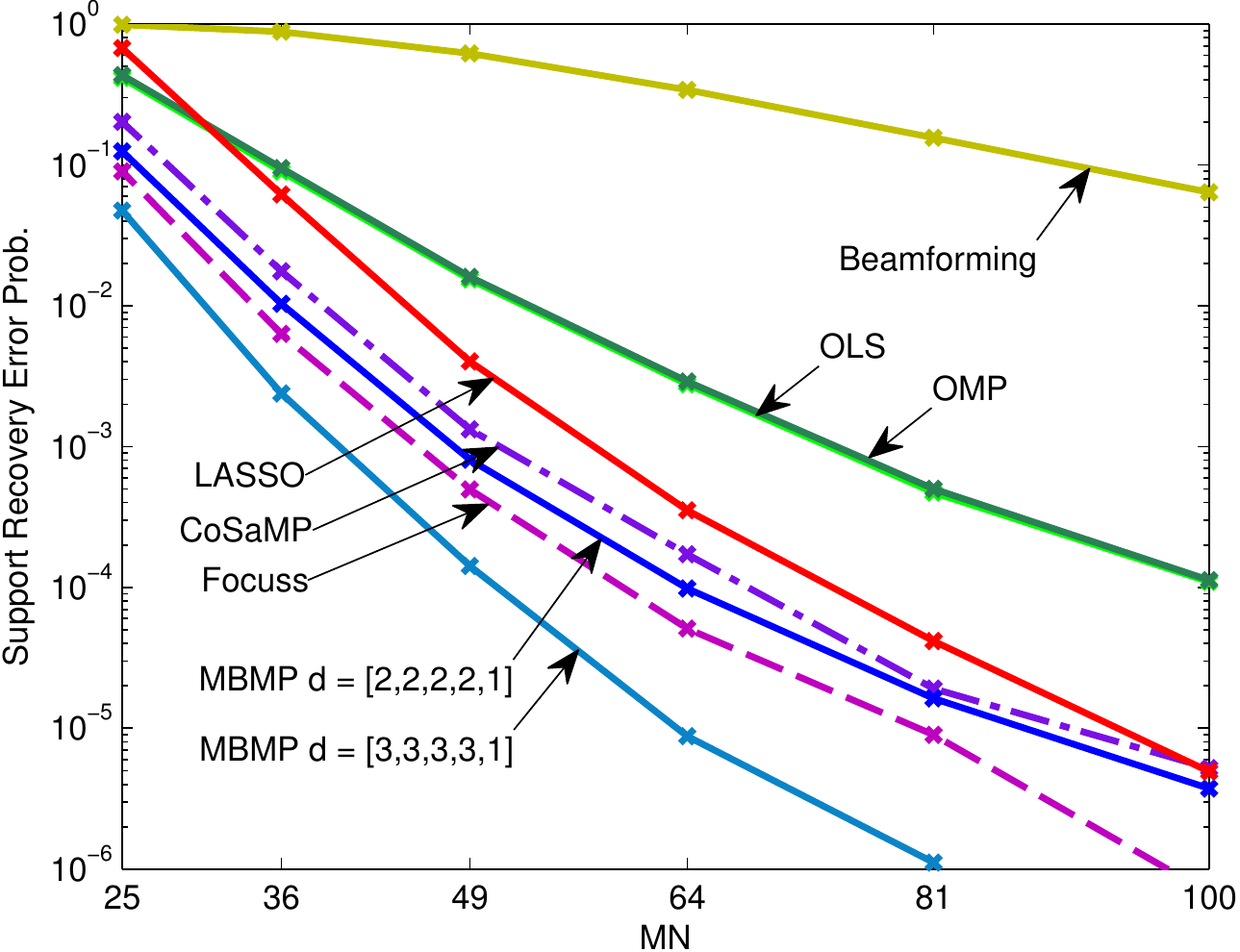}\caption{Probability of support recovery error as a function of the number
of rows $MN$ of $\mathbf{A}$. Non-uniform SMV setup. The system
settings are $Z=250$, $G=251$, $P=1$ and $K=5$ targets with $\left\vert x_{k}\right\vert =1$
for all $k$. The SNR is $20$db.}
\label{fig: nonuniform}
\end{figure}

\begin{figure}[ptb]
\centering{}\includegraphics[width=3.4402in,height=2.7121in]{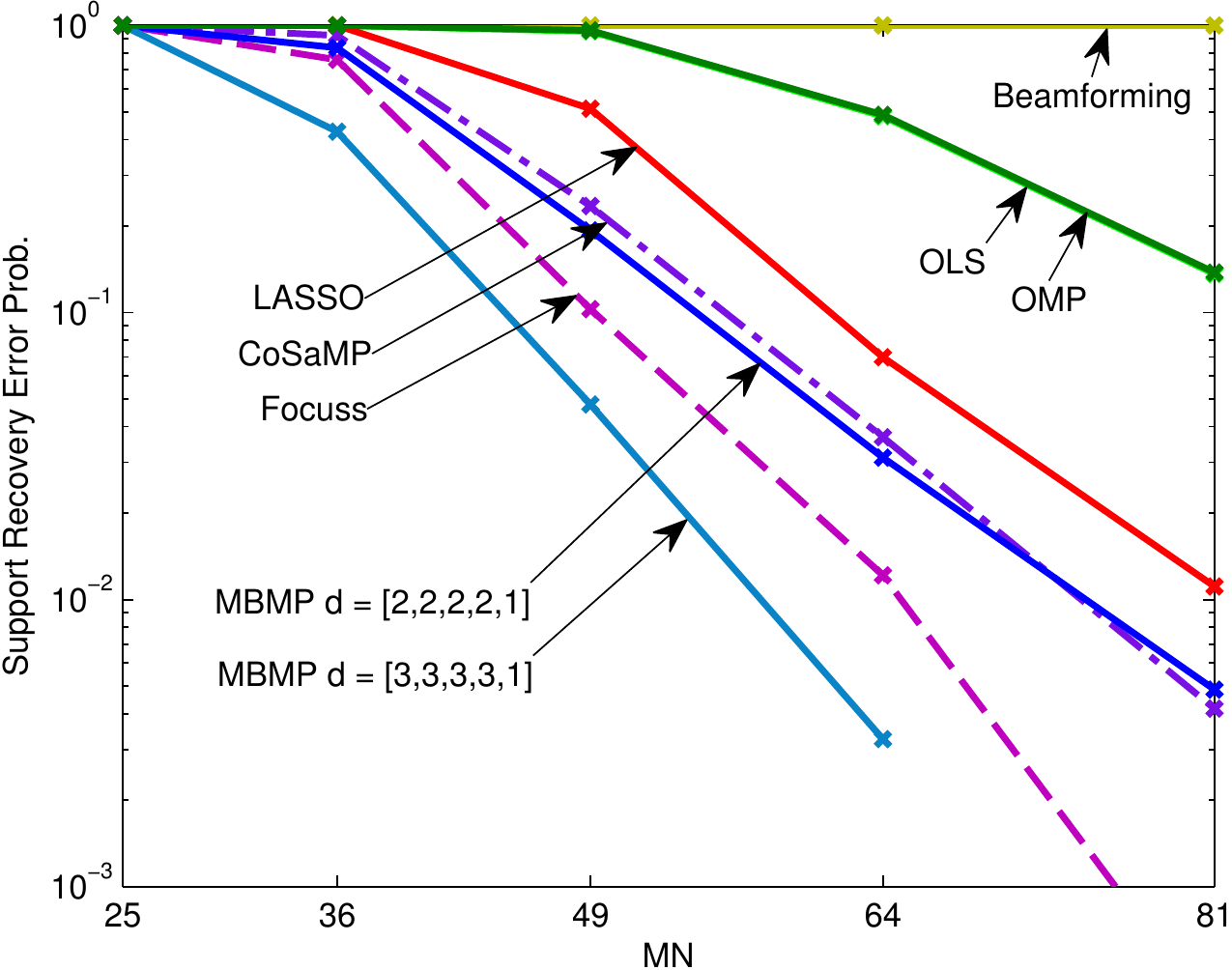}\caption{Probability of support recovery error as a function of the number
of rows $MN$ of $\mathbf{A}$. Uniform SMV setup. The system settings
are $Z=250$, $G=251$, $P=1$ and $K=5$ targets with $\left\vert x_{k}\right\vert =1$
for all $k$. The SNR is $20$db.}
\label{fig: uniform}
\end{figure}


We next consider uniform guarantees. In this setup, we first generate
a realization of the matrix $\mathbf{A}$ by drawing at random the
element positions. Maintaining the matrix $\mathbf{A}$ fixed, we
perform $500$ Monte Carlo simulations using independent realizations
of target gains, targets locations and noise. For each recovery method,
we defined a support recovery error if an error occur in any of the
$500$ simulations. We then average throughout element positions realizations.\ Fig.
\ref{fig: uniform} illustrates the probability of support recovery
error as a function of the number of measurements $MN$. The difference
among OLS/OMP and the more sophisticated methods (i.e., LASSO, CoSaMP,
FOCUSS and MBMP) is even more evident in this setup (e.g. at $MN=81$,
the probability of OLS/OMP error is greater than $0.1$), confirming
the theoretical finding \cite{RauhutNouniformOMP} of OLS/OMP unfitness
to deliver uniform recovery. On the other hand, MBMP, an extension
of OLS, still provides competitive performance. In particular, MBMP
with $\mathbf{d}=\left[3,3,3,3,1\right]$ outperform the other methods.

The theoretical results presented in this work focus on the SMV setting.
However, in practice several snapshots can be available. To explore
the benefits of the proposed MIMO random array framework in such case,
in Fig. \ref{fig: P5 snr20 MNvec} we consider an MMV setting ($P=5$)
and we compare sparse recovery methods with the well-known MUSIC algorithm.
We evaluate five different elements configurations: $\left[M,N\right]=\left[3,3\right]$,
$\left[4,4\right]$, $\left[5,5\right]$, $\left[6,6\right]$ and
$\left[7,7\right]$. The figure illustrates the probability of support
recovery error as a function of the number of measurements $MN$ (nonuniform
setup). Sparse recovery algorithms have better performances than MUSIC,
and the availability of multiple snapshots allows to considerably
reduce the number of antenna elements. Moreover, in the MMV setting,
algorithms which are able to exploit the signal subspace information
(e.g. MBMP and RA-ORMP) posses a clear advantage over those algorithms
that are unable (e.g. M-FOCUSS). For instance, this can be appreciated
by the difference in performance of FOCUSS and MBMP with $\mathbf{d}=\left[2,2,2,2,1\right]$
when comparing the SMV (Fig. \ref{fig: nonuniform}) and MMV (Fig.
\ref{fig: P5 snr20 MNvec}) settings. The numerical simulations presented
in this paper considered a medium SNR level and show a superior performance
of sparse recovery methods over classical methods (e.g. beamforming
or MUSIC) in the proposed framework. Since the sparsity property,
upon which sparse recovery methods rely, is independent from the SNR,
we expect a similar behavior also at low SNR (e.g., SNR$=0$ dB or
lower). 
\begin{figure}[ptb]
\centering{}\includegraphics[width=3.4904in,height=2.7268in]{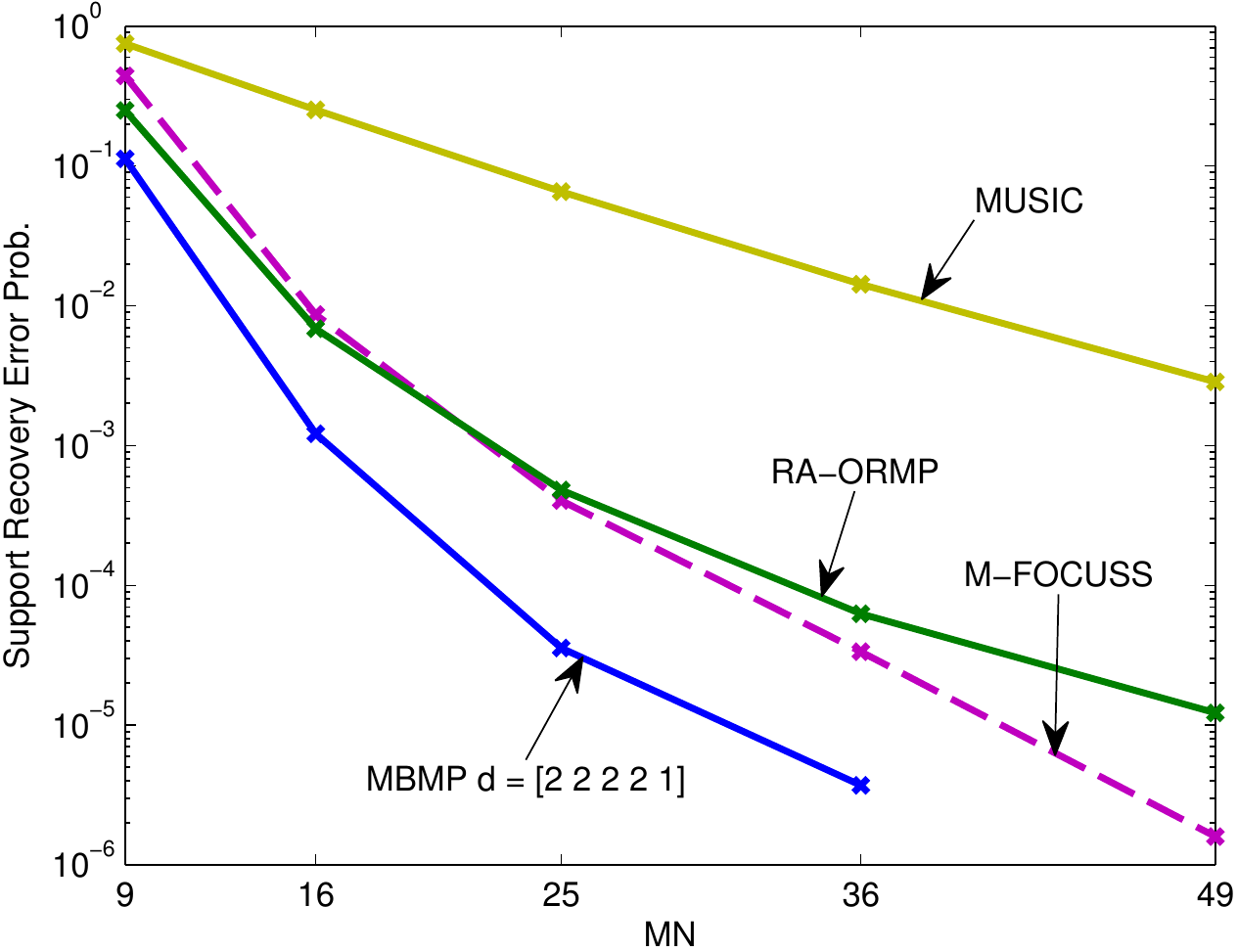}\caption{Probability of support recovery error as a function of the number
of rows $MN$ of $\mathbf{A}$. Non-uniform MMV setup. The system
settings are $Z=250$, $G=251$, $P=5$ and $K=5$ targets with $\left\vert x_{k,p}\right\vert =1$
for all $k,p$. The SNR is $20$db.}
\label{fig: P5 snr20 MNvec}
\end{figure}


\section{Conclusions\label{s:conc}}

We propose a sparse framework to address the source localization problem
for a random array MIMO radar system. We link system design quantities,
i.e., the probability distributions $p\left(\xi\right)$ and $p\left(\zeta\right)$
of the tx/rx sensors location and the sparse localization grid points
$\phi_{1:G}$, with the statistics of the Gram matrix $\mathbf{Q}$
and the related coherence of the matrix $\mathbf{A}$. Based on this
result, we were able to develop uniform and non-uniform recovery guarantees
for spatial compressive sensing. We show that within the proposed
framework, it is possible to localize $K$ targets using about $MN=K\left(\log G\right)^{2}$
MIMO radar noisy measurements, where $G$ is proportional to the array
aperture and determines the angle resolution. In other words, the
proposed framework supports the high-resolution provided by the virtual
array aperture while using a reduced number of MIMO radar elements.
This is in contrast with a filled virtual MIMO\ array for which the
product $MN$ scales linearly with $G$. Moreover, since the results
characterize the product of the number of transmit and receive elements,
MIMO random array implementation further reduces the total number
of antenna elements needed. From numerical simulations it emerges
that, in the proposed framework, compressive sensing recovery algorithms
(e.g. MBMP) are capable of better performance (i.e., smaller probability
of sidelobe errors and better resolution) than classical methods,
such as beamforming and MUSIC.

\section{Appendix}

\subsection{Proof of Proposition \ref{Lemma beampattern}\label{App Beampattern}}

\subsubsection{Mean}

The mean $\eta\left(u\right)$ is by definition the expectation of
the random array pattern, i.e., $\mathbb{E}\left[\beta\left(u\right)\right]$,
over $z_{mn}=\xi_{m}+\zeta_{n}$. The expectation and the summations
can be interchanged obtaining
\begin{equation}
\eta\left(u\right)=\frac{1}{MN}\sum_{m=1}^{M}\sum_{n=1}^{N}\mathbb{E}\left[\exp\left(juz_{mn}\right)\right].
\end{equation}
Moreover, the average of $\exp\left[ju_{i,l}\left(\zeta_{n}+\xi_{m}\right)\right]$
does not depend on the index $n$ and $m$, since $\zeta_{1:N}$ are
identically distributed, and so are $\xi_{1:M}$. By dropping the
indexes of $\zeta_{n}$ and $\xi_{m}$ and using $z=\xi+\zeta$, we
have the sum of $MN$ identical terms, divided by $MN$. Thus $\eta\left(u\right)$
equals $\mathbb{E}\left[\exp\left(juz\right)\right]$, the characteristic
function of the random variable\ $z$.

\subsubsection{Variance}

Let $\sigma_{1}^{2}\left(u\right)=$ $\operatorname{Re}\beta\left(u\right)$
and $\sigma_{2}^{2}\left(u\right)=$ $\operatorname{Im}\beta\left(u\right)$.
For brevity of notation, we drop the dependency on $u$. First notice
that since $p\left(z\right)$ is even, its characteristic function
is real, thus so is the mean value of the array pattern $\eta$. We
also have that $\mathbb{E}\left[\left(\operatorname{Re}\beta-\eta\right)\operatorname{Im}\beta\right]=\mathbb{E}\left[\operatorname{Re}\beta\operatorname{Im}\beta\right]-\eta\mathbb{E}\left[\operatorname{Im}\beta\right]=0$,
since the real and imaginary parts are uncorrelated and because $\mathbb{E}\left[\operatorname{Im}\beta\right]=0$.
Next, we need to evaluate $\sigma_{1}^{2}\triangleq\mathbb{E}\left[\left(\operatorname{Re}\beta-\eta\right)^{2}\right]$
and $\sigma_{2}^{2}\triangleq\mathbb{E}\left[\left(\operatorname{Im}\beta\right)^{2}\right]$.
In order to derive these quantities, we consider the expectations
given by $\mathbb{E}\left[\left(\beta-\eta\right)^{2}\right]$ and
$\mathbb{E}\left[|\beta-\eta|^{2}\right]$. It can be shown that,
\begin{equation}
\mathbb{E}\left[\left(\beta-\eta\right)^{2}\right]=\sigma_{1}^{2}-\sigma_{2}^{2}+j2\sigma_{12}\label{eqn:cov1}
\end{equation}
and 
\begin{equation}
\mathbb{E}\left[|\beta-\eta|^{2}\right]=\sigma_{1}^{2}+\sigma_{2}^{2}.\label{eqn:cov2}
\end{equation}
Substituting the definition of the random array pattern $\beta\left(u\right)$
(\ref{eq: beta}) and (\ref{eqn:mean_pat_app}) in (\ref{eqn:cov1})
and (\ref{eqn:cov2}), we obtain (\ref{e:sig1}) and (\ref{e:sig2}).

\subsection{Proof of Lemma \ref{Lemma Toeplitz}\label{App Toeplitz}}

From (\ref{eq: beta}) we have that $\mathbf{a}_{i}^{H}\mathbf{a}_{l}=MN\cdot\beta\left(u_{i,l}\right)$,
where $u_{i,l}\triangleq\pi Z\left(\phi_{i}-\phi_{l}\right)$. When
$\phi_{1:G}$ is a uniform grid, $\phi_{i}-\phi_{l}$ is constant
whenever $i-l$ is constant, i.e., along every diagonal of the matrix
$\mathbf{Q}$. Since $\beta\left(u_{i,l}\right)$ depends only on
the term $\phi_{i}-\phi_{l}$ (not on the actual $\phi_{i}$ and $\phi_{j}$),
$\mathbf{Q}$ is a Toeplitz matrix.

\subsection{Proof of Theorem \ref{Theorem cdf sidelobes}\label{App CDF sidelobes}}

We define the array pattern associated with the transmitter as
\begin{equation}
\beta_{\zeta}\left(u_{i,l}\right)\triangleq\frac{1}{N}\sum_{n=1}^{N}\exp\left[ju_{i,l}\zeta_{n}\right]=\frac{1}{N}\mathbf{b}_{i}^{H}\mathbf{b}_{l}
\end{equation}
and with the receiver arrays as:
\begin{equation}
\beta_{\xi}\left(u_{i,l}\right)\triangleq\frac{1}{M}\sum_{m=1}^{M}\exp\left[ju_{i,l}\xi_{m}\right]=\frac{1}{M}\mathbf{c}_{i}^{H}\mathbf{c}_{l}.
\end{equation}
Statistical properties of random arrays were analyzed in \cite{Lo64}\ in\ the
case of passive localization (i.e., an array with only receiving elements).
The following lemma customizes useful results from \cite{Lo64}:

\begin{lemma} Let the locations $\zeta_{1:N}$ of the receiving array
be i.i.d., drawn from an even distribution $p\left(\zeta\right)$
and consider a given $u$. Then $\beta_{\zeta}\left(u\right)$ is
asymptotically jointly Gaussian distributed (we neglect the dependency
on $u$):
\begin{equation}
\left[\begin{array}{c}
\operatorname{Re}\beta_{\zeta}\\
\operatorname{Im}\beta_{\zeta}
\end{array}\right]\sim\mathcal{N}\left(\left[\begin{array}{c}
\operatorname{Re}\psi_{\zeta}\\
\operatorname{Im}\psi_{\zeta}
\end{array}\right]\mathbf{,}\left[\begin{array}{cc}
\sigma_{1}^{2} & 0\\
0 & \sigma_{2}^{2}
\end{array}\right]\right)
\end{equation}
where $\sigma_{1}^{2}\left(u\right)=\frac{1}{2N}\left[1+\psi_{\zeta}\left(2u\right)\right]-\frac{1}{N}\psi_{\zeta}^{2}\left(u\right)$
and $\sigma_{2}^{2}\left(u\right)=\frac{1}{2N}\left[1-\psi_{\zeta}\left(2u\right)\right]$.
\end{lemma}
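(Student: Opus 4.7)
The plan is to recognize that $\beta_{\zeta}(u)=\frac{1}{N}\sum_{n=1}^{N}\exp(ju\zeta_{n})$ is a normalized sum of $N$ i.i.d. bounded complex random variables, so that asymptotic joint normality of $(\operatorname{Re}\beta_{\zeta},\operatorname{Im}\beta_{\zeta})$ is a direct consequence of the multivariate Central Limit Theorem applied to the i.i.d. pairs $(\cos(u\zeta_{n}),\sin(u\zeta_{n}))$. The substance of the lemma is therefore not the normality but the explicit form of the mean vector and the diagonal covariance matrix, which I will obtain by direct moment computation.

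First I would identify the mean: by linearity, $\mathbb{E}[\beta_{\zeta}(u)]=\mathbb{E}[\exp(ju\zeta)]=\psi_{\zeta}(u)$, and splitting into real and imaginary parts gives the stated mean vector $[\operatorname{Re}\psi_{\zeta},\operatorname{Im}\psi_{\zeta}]^{T}$. (The evenness of $p(\zeta)$ in fact forces $\operatorname{Im}\psi_{\zeta}=0$, which will be used below.) Next I would compute the per-sample second moments via the standard half-angle identities $\cos^{2}\alpha=\tfrac{1}{2}(1+\cos 2\alpha)$ and $\sin^{2}\alpha=\tfrac{1}{2}(1-\cos 2\alpha)$, yielding
\begin{align*}
\operatorname{Var}[\cos(u\zeta)] &= \tfrac{1}{2}\bigl(1+\psi_{\zeta}(2u)\bigr)-\psi_{\zeta}^{2}(u),\\
\operatorname{Var}[\sin(u\zeta)] &= \tfrac{1}{2}\bigl(1-\psi_{\zeta}(2u)\bigr).
\end{align*}
Dividing each by $N$ (since the $\zeta_{n}$ are i.i.d.) gives exactly the claimed $\sigma_{1}^{2}(u)$ and $\sigma_{2}^{2}(u)$.

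To establish the zero off-diagonal covariance, I would compute $\operatorname{Cov}[\cos(u\zeta),\sin(u\zeta)]=\tfrac{1}{2}\mathbb{E}[\sin(2u\zeta)]-\operatorname{Re}\psi_{\zeta}(u)\cdot\operatorname{Im}\psi_{\zeta}(u)$; both terms vanish because $p(\zeta)$ is even, so $\mathbb{E}[\sin(2u\zeta)]=0$ and $\operatorname{Im}\psi_{\zeta}(u)=0$. Hence the single-sample covariance matrix is diagonal, and so is the covariance of its scaled sum. Finally, the multivariate Lindeberg-Lévy CLT applied to the i.i.d. bounded vectors $(\cos(u\zeta_{n}),\sin(u\zeta_{n}))^{T}$ yields the asymptotic bivariate normality with the mean and covariance computed above.

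No step should be genuinely difficult: the distribution $p(\zeta)$ is supported on a bounded interval, so all moments trivially exist and the CLT applies without delicate hypotheses. The only place one must be careful is in invoking the evenness of $p(\zeta)$ in the right places, namely (i) to vanish $\operatorname{Im}\psi_{\zeta}(u)$ and $\operatorname{Im}\psi_{\zeta}(2u)$ so the covariance and the formula for $\sigma_{2}^{2}$ come out real, and (ii) to kill the cross-covariance. This is the only substantive use of the evenness hypothesis in the proof.
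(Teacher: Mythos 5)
Your proof is correct. The paper itself offers no argument for this lemma beyond the citation ``See \cite{Lo64}''; your derivation --- multivariate CLT on the i.i.d.\ bounded pairs $\left(\cos\left(u\zeta_{n}\right),\sin\left(u\zeta_{n}\right)\right)$, half-angle identities for the second moments, and evenness of $p\left(\zeta\right)$ to make $\psi_{\zeta}$ real and kill the cross-covariance --- is exactly the standard computation underlying that reference, and you correctly flag the one place where care is needed (the covariance entries as stated already absorb the $1/N$ scaling, i.e., the lemma is the usual informal large-$N$ approximation rather than a formal $\sqrt{N}$-normalized limit).
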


\begin{proof} See \cite{Lo64}. \end{proof}

The joint distribution of $\operatorname{Re}\beta_{\xi}\left(u\right)$
and $\operatorname*{Im}\beta_{\xi}\left(u\right)$ can be obtained
similarly.

For a given $i\in\left\{ 2,\ldots,G\right\} $, using Lemma \ref{Lemma beampattern}
and the assumption that the mean patterns of both the transmitter
and receiver arrays satisfy (\ref{eq: eta zero}), i.e., $\psi_{\xi}\left(u_{1,i}\right)=\psi_{\xi}\left(2u_{1,i}\right)=\psi_{\zeta}\left(u_{1,i}\right)=\psi_{\zeta}\left(2u_{1,i}\right)=0$,
we have that, for both transmitter and receiver arrays, the array
pattern evaluated at any grid point is being drawn from an asymptotically
complex normal distribution with variance defined by the number of
transmit and receive elements, i.e., $\beta_{\xi}\left(u_{1,i}\right)\sim\mathcal{CN}\left(0,\frac{1}{M}\right)$
and $\beta_{\zeta}\left(u_{1,i}\right)\sim\mathcal{CN}\left(0,\frac{1}{N}\right)$.
It follows that the random variable $q=\frac{1}{N}\left\vert \mathbf{b}_{1}^{H}\mathbf{b}_{i}\right\vert $
can be approximated as belonging to Rayleigh distribution, i.e., $p\left(q\right)=\left(q/\sigma^{2}\right)\exp\left(-q^{2}/2\sigma^{2}\right)$,
where $\sigma^{2}=1/\left(2N\right)$, and similarly the random variable
$\frac{1}{N}\left\vert \mathbf{c}_{1}^{H}\mathbf{c}_{i}\right\vert $
is governed by a Rayleigh distribution with variance $\sigma^{2}=1/\left(2M\right)$.

If $\xi$ and $\zeta$ are independent (part 1), the two random variables
$\frac{1}{N}\left\vert \mathbf{b}_{1}^{H}\mathbf{b}_{i}\right\vert $
and $\frac{1}{N}\left\vert \mathbf{c}_{1}^{H}\mathbf{c}_{i}\right\vert $
are independent. Using (\ref{eq: aa bbcc}), we have that the distribution
of $\frac{1}{MN}\left\vert \mathbf{a}_{1}^{H}\mathbf{a}_{i}\right\vert $
is the product of two independent Rayleigh distributed variables.
The cumulative density function of such a variable is given in \cite{Simon02}.
It follows that the ccdf of $\frac{1}{MN}\left\vert \mathbf{a}_{1}^{H}\mathbf{a}_{i}\right\vert $
satisfies
\begin{equation}
\Pr\left(\frac{1}{MN}\left\vert \mathbf{a}_{1}^{H}\mathbf{a}_{i}\right\vert >q\right)<x\cdot K_{1}\left(x\right),
\end{equation}
where $x\triangleq2\sqrt{MN}q$.\\
If $\xi_{n}=\zeta_{n}$ for all $n$ (part 2), by using (\ref{eq: aa bbcc}),
we have that $\frac{1}{N^{2}}\left\vert \mathbf{a}_{1}^{H}\mathbf{a}_{i}\right\vert =\left(\frac{1}{N}\left\vert \mathbf{b}_{1}^{H}\mathbf{b}_{i}\right\vert \right)^{2}$.
Since the random variable $\frac{1}{N}\left\vert \mathbf{b}_{1}^{H}\mathbf{b}_{i}\right\vert $
has a Rayleigh distribution, $\frac{1}{N^{2}}\left\vert \mathbf{a}_{1}^{H}\mathbf{a}_{i}\right\vert $
is distributed as the square of a Rayleigh distribution, which has
cdf $1-\exp\left(-Nq\right)$.\ As such, its ccdf satisfies
\begin{equation}
\Pr\left(\frac{1}{N^{2}}\left\vert \mathbf{a}_{1}^{H}\mathbf{a}_{i}\right\vert >q\right)<\exp\left(-Nq\right).
\end{equation}
\\
Part 3 follows because, from (\ref{eq: aa bbcc}), the phase of $\mathbf{a}_{1}^{H}\mathbf{a}_{i}$
is the sum of the phases of $\mathbf{b}_{1}^{H}\mathbf{b}_{i}$ and
$\mathbf{c}_{1}^{H}\mathbf{c}_{i}$. In the case of transceivers the
phase of $\mathbf{a}_{1}^{H}\mathbf{a}_{i}$ is evidently uniform
since it is the same phase of $\mathbf{b}_{1}^{H}\mathbf{b}_{i}$.
In the case of $M$ transmitter and $N$ receivers, since both $\mathbf{b}_{1}^{H}\mathbf{b}_{i}$
and $\mathbf{c}_{1}^{H}\mathbf{c}_{i}$ are two independent circular
symmetric complex normal variables, the sum of the phases is itself
uniformly distributed over $\left[0,2\pi\right)$.

\subsection{Proof of Corollary \ref{Corollary Coherence}\label{App corollary coherence}}

We take the conservative assumption of independence between the $G-1$
random variables $\left\vert \frac{1}{MN}\mathbf{a}_{1}^{H}\mathbf{a}_{i}\right\vert $,
for $i=2,\ldots,G$. If $\xi$ and $\zeta$ are independent (part
1), from (\ref{eq: cdf mu}), the ccdf of the maximum among $G-1$
such variables (which gives the coherence), is upper bounded by
\begin{equation}
\Pr\left(\max_{i>1}\left\vert \frac{1}{MN}\mathbf{a}_{1}^{H}\mathbf{a}_{i}\right\vert >q\right)<1-\left[1-x\cdot K_{1}\left(x\right)\right]^{G-1},
\end{equation}
where $x\triangleq2\sqrt{MN}q$.\\
If $\xi_{n}=\zeta_{n}$ for all $n$ (part 2), by using (\ref{eq: cdf mu2}),
the ccdf of the maximum among $G-1$ such variables, is upper bounded
by
\begin{equation}
\Pr\left(\max_{i>1}\left\vert \frac{1}{N^{2}}\mathbf{a}_{1}^{H}\mathbf{a}_{i}\right\vert >q\right)<1-\left[1-\exp\left(Nq\right)\right]^{G-1}.
\end{equation}
This concludes the proof.

\subsection{Proof of Theorem \ref{Theorem MN coherence}\label{App MN coherence}}

The theorem follows by combining the claims of Theorem 2.7 in \cite{rauhut}
and Corollary \ref{Corollary Coherence}. Theorem 2.7 in \cite{rauhut}
provides stable recovery guarantees for any $K$-sparse signal if
the measurement matrix $\mathbf{A}$ has RIP $\delta_{2K}<2/\left(3+\sqrt{7/4}\right)\triangleq\alpha$.
The goal is therefore to bound the RIP $\delta_{2K}$ of the spatial
compressive sensing measurement matrix $\mathbf{A}$ with probability
higher than $1-\epsilon$. In other words, we want to find how many
measurements $MN$ we need to satisfy $\Pr\left(\delta_{2K}\leq\alpha\right)\geq1-\epsilon$.
By using $\delta_{2K}\leq\left(2K-1\right)\mu$ \cite{rauhut}, we
have that $\delta_{2K}\leq\alpha$ if $\left(2K-1\right)\mu\leq\alpha$.
Moreover, the condition $\Pr\left(\left(2K-1\right)\mu\leq\alpha\right)\geq1-\epsilon$
is equivalent to $\Pr\left(\mu>\alpha/\left(2K-1\right)\right)<\epsilon$.
Therefore by invoking (\ref{eq: upper bound mu}) in Corollary \ref{Corollary Coherence},
we can write
\begin{equation}
\Pr\left(\delta_{2K}>\alpha\right)<1-\left[1-x\cdot K_{1}\left(x\right)\right]^{G-1}
\end{equation}
where, by combining $x\triangleq2\sqrt{MN}q$ and $q=\alpha/\left(2K-1\right)$,
we have $x=2\sqrt{MN}\alpha/\left(2K-1\right)$. We thus look for
the value $MN$ that makes the right hand-side equal to $\epsilon$.

We first approximate the modified Bessel function of the second kind
$K_{1}\left(x\right)$ for a large absolute value and small phase
of the argument (in our setting, the argument $x$ is real) \cite{Abramoviz}:
$K_{1}\left(x\right)\approx\sqrt{\frac{\pi}{2x}}\exp\left(-x\right)$.
We thus would like to enforce $1-\left[1-\sqrt{\frac{\pi x}{2}}\exp\left(-x\right)\right]^{G-1}=\epsilon$.
Defining $t\triangleq\sqrt{\frac{\pi x}{2}}\exp\left(-x\right)$ and
linearizing the function $\left(1-t\right)^{G-1}$ around $t=0$,
we obtain $G\sqrt{\frac{\pi x}{2}}\exp\left(-x\right)=\epsilon$,
where, for simplicity, we used $G$ in place of $G-1$.\ This equation
can be rewritten in the form $-2x\exp\left(-2x\right)=-\gamma^{-2}$,
where $\gamma\triangleq\frac{\sqrt{\pi}G}{2\epsilon}$. The inverse
function of such equation is called the Lambert $W$ function \cite{LambertW}.
For real arguments, it is not injective, therefore it is divided in
two branches: $x>1/2$ or $x\leq1/2$. Since in our setup\ $x\leq1/2$,
the lower branch, denoted $W_{-1}$, is considered, and our solution
satisfies $-2x=W_{-1}\left(-\gamma^{-2}\right)$. By using the asymptotic
expansion $W_{-1}\left(-\gamma^{-2}\right)\approx-2\ln\gamma-\ln\left(2\ln\gamma\right)$
and solving for $MN$ we obtain (\ref{eq: MN coherence}). The claim
of the theorem\ follows from\ Theorem 2.7 in \cite{rauhut}. Finally,
since in this work we consider $K$-sparse signals, in the error term
(\ref{eq: error2}), we discarded the term for nearly-sparse signals
present in \cite{rauhut}.

\subsection{Proof of Theorem \ref{Lemma Isotropy}\label{App isotropy}}

Because the variables $\zeta_{1:N}$ are identically distributed,
and so are $\xi_{1:M}$, the average $\mathbb{E}\left[\mathbf{A}^{H}\left(t,:\right)\mathbf{A}\left(t,:\right)\right]$
does not depend on the index $t=N\left(m-1\right)+n$, where the last
relation follows from the definition of $\mathbf{a}_{g}$. Therefore,
we have $\mathbb{E}\left[\mathbf{A}^{H}\left(t,:\right)\mathbf{A}\left(t,:\right)\right]=\frac{1}{MN}\sum_{t=1}^{MN}\mathbb{E}\left[\mathbf{A}^{H}\left(t,:\right)\mathbf{A}\left(t,:\right)\right]=\frac{1}{MN}\mathbb{E}\left[\mathbf{Q}\right]$.
Thanks to Lemma \ref{Lemma Toeplitz}, we can focus only on the first
row of $\frac{1}{MN}\mathbb{E}\left[\mathbf{Q}\right]$. Using (\ref{eq: beta}),
the elements of the first row of such matrix are $\eta\left(u_{1,i}\right)$
for $i=1,\ldots,G$. From (\ref{eqn:mean_pat_app}) in Proposition
\ref{Lemma beampattern}, we know that $\eta\left(u_{1,i}\right)=\psi_{z}\left(u_{1,i}\right)$.
Thus, requiring (\ref{eq: dirac}), i.e., $\psi_{z}\left(u_{1,i}\right)=0$
for $i=2,\ldots,G$, together with the fact that $\exp\left(jzu_{1,1}\right)=1$
(because\ $u_{1,1}\triangleq\pi Z\left(\phi_{1}-\phi_{1}\right)=0$),
gives the ``if\textquotedblright{}\ direction of the claim.

The ``only if\textquotedblright{}\ direction follows by noticing
that when (\ref{eq: dirac}) is not satisfied there will be at least
one $i$ such that $\eta\left(u_{1,i}\right)\neq0$. Therefore, the
matrix $\mathbf{A}$ does not satisfy the isotropy property, showing
that (\ref{eq: dirac}) is also a necessary condition.

%

\begin{IEEEbiography}
[{\includegraphics[width=1in,height=1.25in,clip,keepaspectratio]{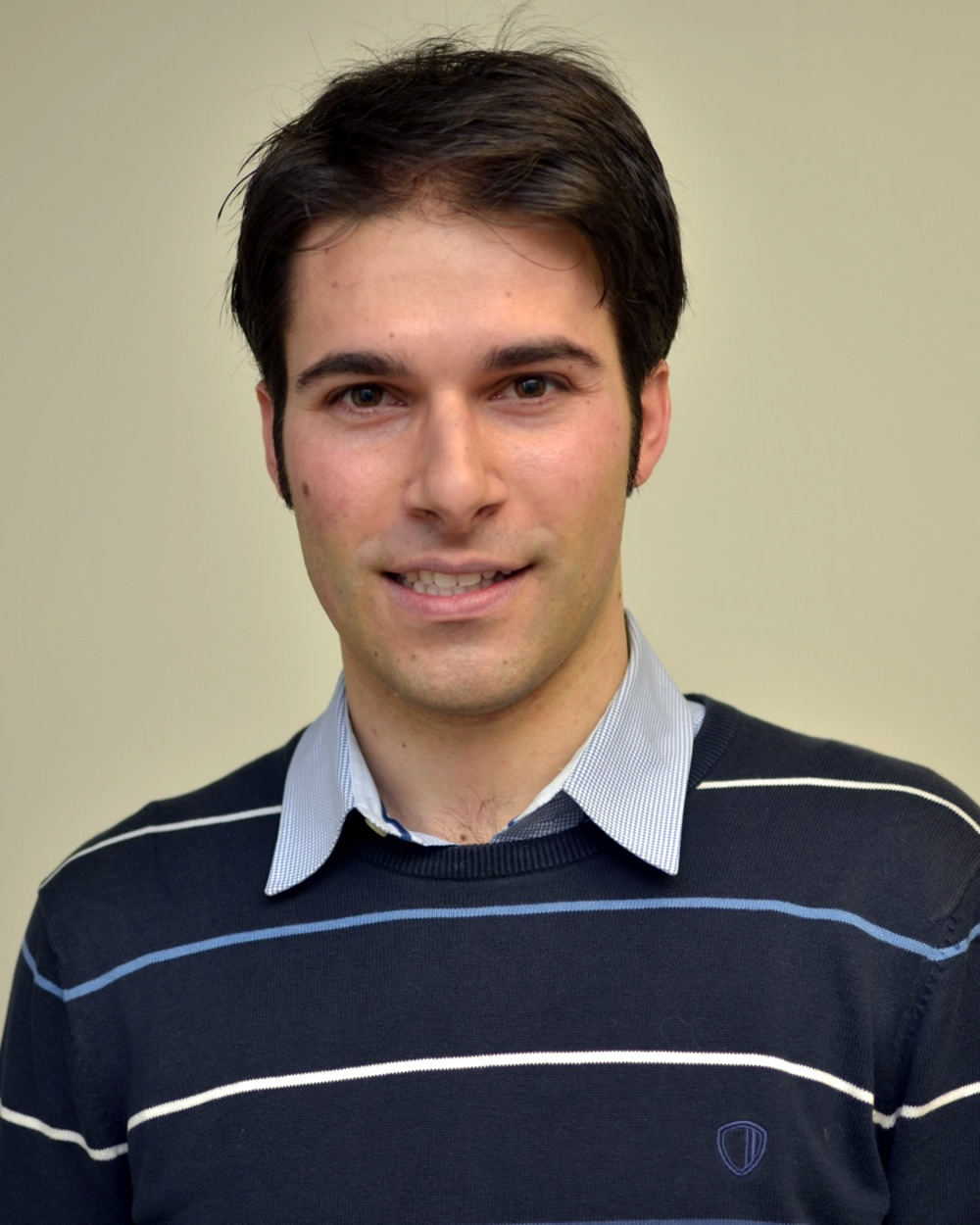}}]{Marco
Rossi} (S\textquoteright{}10) received the B.Sc. and M.Sc. degrees
in information engineering from Politecnico di Milano, Milan, Italy,
in 2004 and 2007, respectively. From November 2006 to November 2007,
as part of his M.Sc. thesis, he worked in the Radio System Technology
Research Group at Nokia-Siemens Network in Milan, Italy.

He is currently a Ph.D. candidate in electrical and computer engineering
at the Center for Wireless Communications and Signal Processing Research
(CWCSPR), New Jersey Institute of Technology (NJIT), Newark, NJ. His
research interests are in optimization methods, with an emphasis on
non-convex and discrete problems, and their applications to signal
processing and wireless communications. Marco Rossi was a recipient
of the Ross Fellowship Scholarship 2010\textendash{}12.
\end{IEEEbiography}
\vspace{.6in}
\begin{IEEEbiography}
[{\includegraphics[width=1in,height=1.25in,clip,keepaspectratio]{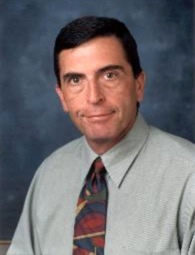}}]{Alexander
M. Haimovich} (S\textquoteright{}82-M\textquoteright{}87-SM\textquoteright{}97-F\textquoteright{}12)
received the B.Sc. degree in electrical engineering from the Technion-Israel
Institute of Technology, Haifa, Israel, in 1977, the M.Sc. degree
in electrical engineering from Drexel University, in 1983, and the
Ph.D. degree in systems from the University of Pennsylvania, Philadelphia,
in 1989. From 1983 to 1990 he was a design engineer and staff consultant
at AEL Industries. He served as Chief Scientist of JJM Systems from
1990 until 1992. He is the Ying Wu endowed Chair and a Professor of
Electrical and Computer Engineering at the New Jersey Institute of
Technology (NJIT), Newark, NJ, where he has been on the faculty since
1992. His research interests include MIMO radar, source localization,
communication systems, and wireless networks.
\end{IEEEbiography}

\vfill

\begin{IEEEbiography}
[{\includegraphics[width=1in,height=1.25in,clip,keepaspectratio]{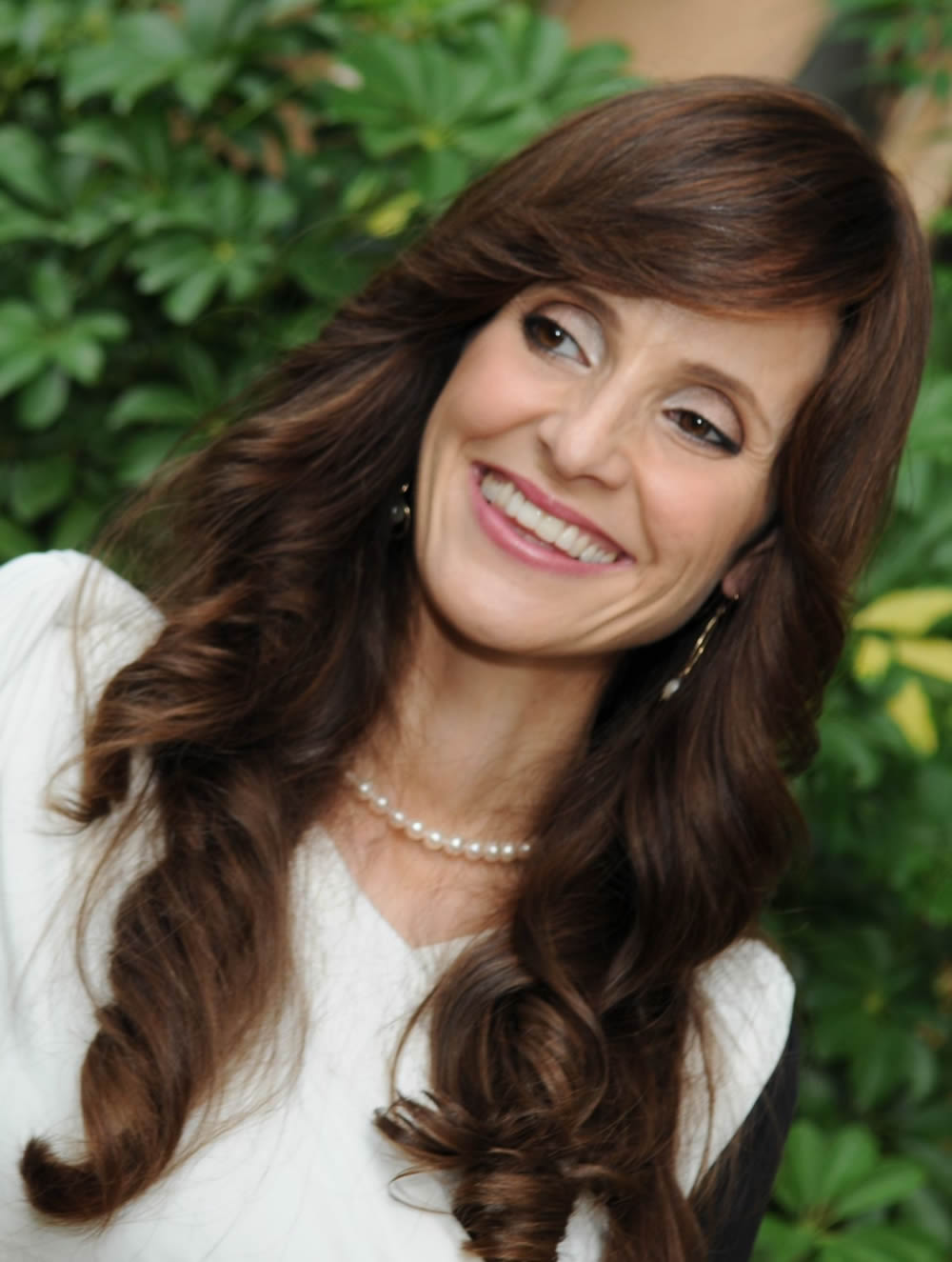}}]{Yonina
C. Eldar} (S\textquoteright{}98-M\textquoteright{}02-SM\textquoteright{}07-F\textquoteright{}12)
received the B.Sc. degree in physics and the B.Sc. degree in electrical
engineering both from Tel-Aviv University (TAU), Tel-Aviv, Israel,
in 1995 and 1996, respectively, and the Ph.D. degree in electrical
engineering and computer science from the Massachusetts Institute
of Technology (MIT), Cambridge, in 2002.

From January 2002 to July 2002, she was a Postdoctoral Fellow at the
Digital Signal Processing Group at MIT. She is currently a Professor
in the Department of Electrical Engineering at the Technion\textendash{}Israel
Institute of Technology, Haifa and holds the The Edwards Chair in
Engineering. She is also a Research Affiliate with the Research Laboratory
of Electronics at MIT and a Visiting Professor at Stanford University,
Stanford, CA. Her research interests are in the broad areas of statistical
signal processing, sampling theory and compressed sensing, optimization
methods, and their applications to biology and optics.

Dr. Eldar was in the program for outstanding students at TAU from
1992 to 1996. In 1998, she held the Rosenblith Fellowship for study
in electrical engineering at MIT, and in 2000, she held an IBM Research
Fellowship. From 2002 to 2005, she was a Horev Fellow of the Leaders
in Science and Technology program at the Technion and an Alon Fellow.
In 2004, she was awarded the Wolf Foundation Krill Prize for Excellence
in Scientific Research, in 2005 the Andre and Bella Meyer Lectureship,
in 2007 the Henry Taub Prize for Excellence in Research, in 2008 the
Hershel Rich Innovation Award, the Award for Women with Distinguished
Contributions, the Muriel \& David Jacknow Award for Excellence in
Teaching, and the Technion Outstanding Lecture Award, in 2009 the
Technion\textquoteright{}s Award for Excellence in Teaching, in 2010
the Michael Bruno Memorial Award from the Rothschild Foundation, and
in 2011 the Weizmann Prize for Exact Sciences. In 2012 she was elected
to the Youg Israel Academy of Science and to the Israel Committee
for Higher Education, and elected an IEEE Fellow. In 2013 she received
the Technion's Award for Excellence in Teaching, and the Hershel Rich
Innovation Award. She received several best paper awards together
with her research students and colleagues. She is a Signal Processing
Society Distinguished Lecturer, and Editor in Chief of Foundations
and Trends in Signal Processing. In the past, she was a member of
the IEEE Signal Processing Theory and Methods and Bio Imaging Signal
Processing technical committees, and served as an associate editor
for the IEEE Transactions On Signal Processing, the EURASIP Journal
of Signal Processing, the SIAM Journal on Matrix Analysis and Applications,
and the SIAM Journal on Imaging Sciences.
\end{IEEEbiography}





\end{document}